\DeclareMathOperator{\diag}{Diag}
\newtcbox{\mymath}[1][]{%
    nobeforeafter, math upper, tcbox raise base,
    enhanced, colframe=blue!30!black,
    colback=blue!30, boxrule=1pt,
    #1}
\newcommand{\copynodes}{{\ensuremath{\overline{V}}}}
\newcommand{\copyedges}{{\ensuremath{\overline{E}}}}
\newcommand{\beql}[1]{\begin{equation}\label{#1}}
\newcommand{\eeq}{\end{equation}}
\newcommand{\field}[1]{\mathbb{#1}} 
\newcommand{\NPhard}{{\ensuremath{\mathbf{NP}}-hard}\xspace}
\newcommand{\spara}[1]{\smallskip\noindent{\bf #1}}
\newcommand{\hide}[1]{}
\newcommand{\squishlist}{
 \begin{list}{$\bullet$}
  {  \setlength{\itemsep}{0pt}
     \setlength{\parsep}{3pt}
     \setlength{\topsep}{3pt}
     \setlength{\partopsep}{0pt}
     \setlength{\leftmargin}{2em}
     \setlength{\labelwidth}{1.5em}
     \setlength{\labelsep}{0.5em}
} }
\newcommand{\squishlisttight}{
 \begin{list}{$\bullet$}
  { \setlength{\itemsep}{0pt}
    \setlength{\parsep}{0pt}
    \setlength{\topsep}{0pt}
    \setlength{\partopsep}{0pt}
    \setlength{\leftmargin}{2em}
    \setlength{\labelwidth}{1.5em}
    \setlength{\labelsep}{0.5em}
} }
\newcommand{\squishdesc}{
 \begin{list}{}
  {  \setlength{\itemsep}{0pt}
     \setlength{\parsep}{3pt}
     \setlength{\topsep}{3pt}
     \setlength{\partopsep}{0pt}
     \setlength{\leftmargin}{1em}
     \setlength{\labelwidth}{1.5em}
     \setlength{\labelsep}{0.5em}
} }
\newcommand{\squishend}{
  \end{list}
}
\newcommand{\squishlistt}{
 \begin{list}{---}
  {  \setlength{\itemsep}{0pt}
     \setlength{\parsep}{3pt}
     \setlength{\topsep}{3pt}
     \setlength{\partopsep}{0pt}
     \setlength{\leftmargin}{2em}
     \setlength{\labelwidth}{1.5em}
     \setlength{\labelsep}{0.5em}
} }
\newtheorem{lm}{Lemma}
\newtheorem{theorem}[lm]{Theorem}
\theoremstyle{definition}
\newtheorem{defi}[lm]{Definition}
\newtheorem{cor}[lm]{Corollary}
\newtheorem{qq}[lm]{Problem}
\newtheorem{eg}[lm]{Example}
\renewcommand{\(}{\left(}
\renewcommand{\)}{\right)}
\newcommand\myeq{\mathrel{\stackrel{\makebox[0pt]{\mbox{\normalfont\tiny def}}}{=}}}
\title{Opinion Dynamics with Varying Susceptibility to Persuasion }
\author[1]{Rediet Abebe}
\author[1]{Jon Kleinberg}
\author[2]{David Parkes}
\author[3]{Charalampos E. Tsourakakis}
\affil[1]{Cornell University}
\affil[2]{Harvard University}
\affil[3]{Boston University}
\date{}                     
\begin{document}

\maketitle

\newcommand{\omt}[1]{}
\newcommand{\xhdr}[1]{\vspace{1.7mm}\noindent{{\bf #1.}}}

\begin{abstract}
A long line of work in social psychology has studied variations in people's susceptibility to persuasion -- the extent to which they are willing to modify their opinions on a topic.  This body of literature suggests an interesting perspective on theoretical models of opinion formation by interacting parties in a network: in addition to considering interventions that directly modify people's intrinsic opinions, it is also natural to consider interventions that modify people's susceptibility to persuasion.  In this work, we adopt a popular model for social opinion dynamics, and we formalize the {\em opinion maximization} and {\em minimization} problems where interventions happen at the level of susceptibility. Specifically, we ask: 

\begin{quotation}
Given an opinion dynamics model, and a set of agents, each of whom has an innate opinion that reflects the agent's intrinsic position on a topic and a susceptibility-to-persuasion parameter measuring the agent's propensity for changing his or her opinion, how should we modify the agents' susceptibilities in order to maximize (or minimize) the total sum of opinions at equilibrium? 
\end{quotation}

We study both the version when there is not a budget on the number of agents that we can target, and when there is. For the former, we provide a polynomial time algorithm for both the {\em maximization} and the {\em minimization}  formulations. In contrast, we show that budgeted version is \NPhard, but unlike prior work focusing on opinion maximization, our objective function is neither sub- nor super-modular in the target set. For this version, we propose a greedy heuristic that we compare experimentally against natural baseline methods.  On the experimental side, we perform an empirical study of our proposed methods on   several datasets, including a Twitter network with real opinion estimates. Our experimental findings show that our proposed tools outperform natural baselines and can achieve a multi-fold effect on the total sum of opinions in the network. 
\end{abstract}

\section{Introduction}
A rich line of empirical work in development and social psychology has studied people's susceptibility to persuasion. This property measures the extent to which individuals are willing to modify their opinions in reaction to the opinions expressed by those around them,
and it is distinct from the opinions they express. Research in the area has ranged from adolescent susceptibility to peer pressure related to risky and antisocial behavior \cite{peer2,peer1,evans1992measuring,peer4,peer3,health} to the role of susceptibility to persuasion in politics \cite{pol3,pol1,pol2}. Individuals' susceptibility to persuasion 
can be affected by specific strategies and framings aimed at increasing susceptibility \cite{cialdini1,cialdini2,crowley,hci,mcguire,socpsy2,socpsy1}. For 
instance, if it is known that an individual is receptive to \emph{persuasion by authority}, one can adopt a strategy that utilizes arguments from official sources and authority figures 
to increase that individuals' susceptibility to persuasion with respect to a particular topic.  

Modifying network opinions has far-reaching implications including product marketing, public health campaigns, the success of political candidates, and public opinions on issues of global 
interest. In recent years, there has also been work in Human Computer Interaction focusing on 
{\em persuasive technologies}, which are designed with the goal of changing a person's attitude or behavior \cite{persuasivetech1,persuasivetech2,hci}. This work has shown that not only do people differ in their susceptibility to persuasion, but that persuasive technologies can also be adapted 
to each individual to change their susceptibility to persuasion. 

Despite the long line of empirical work emphasizing the importance of
individuals' susceptibility to persuasion, to our knowledge
theoretical studies of opinion formation models have not focused on
interventions at the level of susceptibility.  
Social influence studies have considered interventions that
directly act on the opinions themselves, both in discrete models
(as in the work of Domingos and Richardson \cite{dr}, Kempe et al.
\cite{kkt}, and related applications \cite{aak,berger2007consumers,hb,turf})
and more recently in a model with continuous opinions
(in the work of Gionis et al.  \cite{gtt}).

In this work, we adopt an opinion formation model inspired by the work of DeGroot \cite{de} and Friedkin and Johnsen \cite{fj}, and we initiate a study of the impact of interventions at the level of susceptibility. In this model, each agent $i$ is endowed with an {\em innate opinion} $s_i \in [0,1]$ and a parameter representing susceptibility to persuasion, which we will call 
the {\em resistance parameter}, $\alpha_i \in (0,1]$. 
The innate opinion $s_i$ reflects the intrinsic position of agent $i$
on a certain topic, while  $\alpha_i$ reflects the agent's willingness, 
or lack thereof, to conform with the opinions of neighbors in the social 
network.
We term $\alpha_i$ the agent's ``resistance'' because a high value of
$\alpha_i$ corresponds to a lower tendency 
to conform with neighboring opinions.
According to the opinion dynamics model, the final opinion of each
agent $i$ is a function of the social network, the set of innate
opinions, and the resistance parameters, determined by computing
the {\em equilibrium} state of a dynamic process of opinion updating.
For more details on the model,
see Section~\ref{sec:model}.  We study the following natural question:

\begin{tcolorbox}
\begin{qq}
\label{prob1} 
Given an opinion dynamics model, and a set of agents, each of whom has
an innate opinion that reflects the agent's intrinsic position on a topic, and
a resistance parameter measuring the agent's propensity for changing his or her
opinion, how should we change the agents' susceptibility in order to
maximize (or minimize) the total sum of opinions at equilibrium?
\end{qq}
\end{tcolorbox}

We call the set of agents whose susceptibility parameters we change
the \emph{target-set}. In most influence maximization or opinion
optimization settings, the case where there are no constraints on the
size of the target-set is generally not interesting since it often involves
simple solutions that modify a parameter to a specific value
across essentially all agents.
This is not the case in our setting -- the unbudgeted version
remains technically interesting, and the optimal solution does not
generally involve targeting all agents. We will therefore consider
both the unbudgeted and the budgeted versions of this problem.

\spara{The Present Work.} Within the opinion dynamics model we study, 
we formalize the problem of optimizing the total sum of opinions at
equilibrium by  modifying the agents' susceptibility to persuasion.
We formalize both the
maximization and minimization versions of the problems, 
as well as an unbudgeted, and a budgeted variation of these problems.

For the unbudgeted maximization and minimization problems we provide efficient polynomial time algorithms. In contrast to the unbudgeted problem, the budgeted problem is \NPhard. We show that our objective is neither sub- nor super-modular. We prove 
that there always exists an optimal solution where the resistance parameters
$\alpha_i$ are set to extreme values (see Theorem~\ref{thm:esv}). We use this theorem as our guide to provide a heuristic that greedily optimizes the 
agents' susceptibility to persuasion.  


We evaluate our unbudgeted and budgeted methods on real world network
topologies, and on both synthetic and real opinion values. In
particular, we use a Twitter network and opinion data on the Delhi
legislative  assembly elections based on work of 
De et al. \cite{de2014learning}
to see how our methods can potentially
modify the final outcome of opinion dynamics. Our findings 
indicate that intervening at the susceptibility level can
significantly affect the total sum of opinions in the network.

\section{Model}
\label{sec:model}

Let $G = (V,E)$ be a simple, undirected graph, where $V = [n]$ is the set of agents and $E$ is the set of edges. Each agent $i \in V$ is associated with an \emph{innate opinion} $s_i \in [0,1]$, where higher values correspond to more favorable opinions towards a given topic and a parameter measuring an agent's susceptibility to persuasion $\alpha_i \in (0,1]$, where higher values signify agents who are less susceptible to changing their opinion. We call $\alpha_i$ the \emph{resistance parameter}. The opinion dynamics evolve in discrete time according to the following model, inspired by the work of \cite{de,fj}: 

\begin{equation}
\label{eq:dynamics}
x_i(t+1) = \alpha_i s_i + (1-\alpha_i) \frac{ \sum\limits_{j \in N(i)}  x_j(t) }{\deg(i)}.
\end{equation}

\noindent   Here, $N(i) = \{ j : (i, j) \in E \}$ is the set of neighbors of $i$, and $\deg(i)=|N(i)|$. It is known that this dynamics converges to a unique equilibrium if  $\alpha_i > 0$ for all $i \in V$ \cite{deb}. The equilibrium opinion vector $z$  is the solution to a linear system of equations: 

\begin{equation} 
z = (I - (I-A)P)^{-1} A s,
\end{equation}

\noindent where $A=\diag{(\alpha)})$ is a diagonal matrix where entry $A_{i, i}$ corresponds to $\alpha_i$ and $P$ is the random walk matrix, i.e., $P_{i, j} = \frac{1}{deg(i)}$ for each edge $(i, j)\in E(G)$, and zero otherwise. We call $z_i$ the \emph{expressed opinion} of agent $i$.

\begin{defi}
Given the opinion dynamics model \eqref{eq:dynamics}, a social network $G = (V,E)$, and the set of parameters $\{s_i\}_{i \in V},\{\alpha_i\}_{i \in V}$, we define the {\em total sum of opinions} at the equilibrium as 

$$ f(s, \alpha) \myeq  \vec{1}^T z =    \vec{1}^T(I - (I-A)P)^{-1} A s.$$
\end{defi}

Our goal is to optimize the objective $f(s, \alpha)$, by choosing certain $\alpha_i$ parameters in a range $[\ell, u]$,  where $0 < \ell \leq u$.  We consider the following problems:

\begin{tcolorbox}
\begin{qq}[Opinion Maximization]
\label{probmax} 
Given a positive integer $k$, find a set of nodes $S \subseteq V, |S| \leq k$, and  set the resistance parameters $\alpha_i \in [\ell, u]$ for each $i \in S$, in order to maximize the total sum of opinions  $f(s, \alpha)$. We refer to the case $k\geq n$ as the {\em unbudgeted opinion maximization problem} and the case $k < n$ as the {\em budgeted opinion maximization problem}.
\end{qq}
\end{tcolorbox}

Similarly, we define the minimization version. 
  
\begin{tcolorbox}
\begin{qq}[Opinion Minimization]
\label{probmin} 
Given a positive integer $k>0$, find a set of nodes $S \subseteq V, |S| \leq k$, and  set the resistance parametes $\alpha_i \in [\ell, u]$ for each $i \in S$, in order to minimize the total sum of opinions $f(s, \alpha)$. The unbudgeted and budgeted versions are defined as above. 
\end{qq}
\end{tcolorbox}

Previous work has focused on modifying the agents' innate or expressed opinions rather than their resistance parameters. We observe that targeting agents at the level of susceptibility versus at the level of innate opinions can lead to very different results, as the next example shows.

\begin{eg}
\label{eg:motivation}
Consider the following network on $n$ nodes, where each node is annotated with the values $(s_i, \alpha_i)$, and consider interventions at the level of
either innate opinions or resistance parameters with the goal of
maximizing the sum of expressed opinions.

\begin{figure}[!ht]
\centering
\begin{tikzpicture}
[scale=1,auto=left,every node/.style={circle,fill=gray!20}]
  \node[fill=blue!50] (n1) at (0, 0) {\tiny{(0,$\epsilon$)}};
  \node[fill=blue!50] (n2) at (1.5, 0) {\tiny{(0,$\epsilon$)}};
  \node[fill=blue!50] (n3) at (3.5, 0) {\tiny{(0,$\epsilon$)}};
  \node[fill=blue!50] (n4) at (5, 0) {\tiny{(0,$\epsilon$)}};
  \node[fill=blue!50] (n5) at (2.5, 1.5) {\small{(1,$\frac{\epsilon}{c}$)}};
  \foreach \from/\to in {n1/n5, n2/n5, n3/n5, n4/n5}
    \draw (\from) -- (\to);
\draw[dotted](n2) -- (n3); 
\end{tikzpicture}
\end{figure}

Suppose $\epsilon$ is a value arbitrarily close to 0. As we increase
$c$, the total sum of opinions in the network  tends to 0. If we can
target a single agent, then an optimal  intervention at the 
level of innate opinions
changes the innate opinions of one of the leaves to 1. At
equilibrium, the total sum of opinions (for small $\epsilon$ and 
large $c$) is close to $\frac{n}{n-1} \approx 1$.
However, if we can target at the level of susceptibility, then we
change the central node's $\alpha_i$ from $\frac{\epsilon}{c}$ to 1.
Then, at equilibrium the total sum of opinions becomes
$f(s,\alpha) = n (1 - \phi(\epsilon,c))$ for a function $\phi()$
going to $0$ as $\epsilon$ goes to $0$ and $c$ goes to infinity. 
The improvement is $O(n)$, and hence very large compared
to the intervention at the level of innate opinions.

We can construct an instance on the same graph structure to
show the contrast in the opposite direction as well,
where targeting at the level of innate
opinions cause a very large improvement compared to targeting at the
level of susceptibility.  Starting with the example in the figure,
change the innate
opinion of the center node to be $0$ and the resistance parameter to
be $1$. In equilibrium, the sum of expressed opinions is $0$. If we
can target agents at the level of innate opinions, then changing the
innate opinion of the center node to be $1$ will lead to a sum of
expressed opinions of $n$, whereas we cannot obtain a sum of expressed
opinions greater than $0$ by targeting at the level of susceptibility.

\end{eg}

As a first general result about our model, 
we prove that there always exists an optimal solution that involves
extreme values for the resistance parameters $\alpha_i$.

\begin{theorem}
\label{thm:esv}
There always exists an optimal solution for the opinion optimization problems where $\alpha_i=\ell$ or $\alpha_i=u$. 
\end{theorem}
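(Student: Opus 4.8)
The plan is to reduce the theorem to a one–dimensional fact: if we fix the target set and fix every resistance parameter except one, say $\alpha_i$ for a targeted node $i$, then $f(s,\alpha)$ is a \emph{monotone} function of $\alpha_i$ on the interval $[\ell,u]$, so its optimum over $[\ell,u]$ is attained at $\alpha_i=\ell$ or $\alpha_i=u$. Granting this single–variable claim, take any optimal solution, look at its target set $S$ and the values $\{\alpha_i\}_{i\in S}$, and sweep through the nodes of $S$ one at a time: at each node push its current parameter to whichever of $\ell,u$ does not hurt the objective (not decrease it for maximization, not increase it for minimization), keeping all other parameters fixed. After at most $|S|$ such moves, every targeted $\alpha_i$ lies in $\{\ell,u\}$ and the solution is still optimal. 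So the entire proof rests on the monotonicity claim.

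To prove that claim, write $t:=\alpha_i$ and isolate the dependence on $t$. Let $D$ be the diagonal matrix that agrees with $\diag(\alpha)$ except that its $i$-th diagonal entry is $0$, so $A=D+t\,e_ie_i^T$ and hence $(I-A)P=(I-D)P-t\,e_i r_i^T$, where $r_i^T:=e_i^T P$ is the $i$-th row of $P$. Thus $M(t):=I-(I-A)P=M_0+t\,e_ir_i^T$ with $M_0:=I-(I-D)P$ independent of $t$, and similarly $As=b_0+t s_i e_i$ with $b_0:=Ds$. Applying the Sherman--Morrison formula to invert this rank-one update and substituting into $z(t)=M(t)^{-1}As$, the $t^2$ contributions cancel and one is left with
\[
z(t)=M_0^{-1}b_0+\frac{t\,(s_i-w^T b_0)}{1+t\gamma}\,u ,
\]
where $u:=M_0^{-1}e_i$, $w^T:=r_i^T M_0^{-1}$, and $\gamma:=r_i^T M_0^{-1}e_i$. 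Multiplying on the left by $\vec 1^{\,T}$ gives
\[
f(t)=C_1+C_2\cdot\frac{t}{1+\gamma t}
\]
for constants $C_1,C_2,\gamma$ that do not depend on $t$.

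Now $t\mapsto t/(1+\gamma t)$ has derivative $1/(1+\gamma t)^2>0$ wherever it is defined, so it is strictly monotone on any interval avoiding the pole $-1/\gamma$. It therefore suffices to check that $[\ell,u]$ avoids this pole, i.e.\ $1+\gamma t\neq 0$ for $t\in[\ell,u]$. By the matrix determinant lemma, $\det M(t)=\det(M_0)\,(1+\gamma t)$; and for every feasible parameter vector the opinion dynamics converges, equivalently $M(t)$ is invertible, for all $t\in[\ell,u]$ (here we use $\ell>0$; see \cite{deb}). Hence $1+\gamma t$ never vanishes on $[\ell,u]$ and, being continuous, keeps a constant sign there, so $f$ is monotone in $t$ on $[\ell,u]$ and is optimized at an endpoint. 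This establishes the single-variable claim and, with the sweeping argument above, the theorem.

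The only genuine work is the Sherman--Morrison computation and the cancellation that collapses $f(t)$ into a fractional-linear function of $t$; everything else (constant sign of $1+\gamma t$ via the convergence hypothesis, and the finite sweep over the target set) is routine once that form is in hand. One subtlety worth flagging: the argument shows each targeted coordinate can be individually driven to an endpoint without loss, but it does not prescribe \emph{which} endpoint, since the constant $C_2$ for node $i$ depends on the parameters currently held fixed and may flip sign as earlier coordinates are updated; this is exactly why the proof is phrased as a sequential sweep rather than a simultaneous assignment of endpoints.
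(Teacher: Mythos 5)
Your proof is correct and takes a genuinely different route from the paper's. The paper relies on the absorbing-random-walk interpretation of the equilibrium from \cite{gionis2013opinion}: it writes $z_i^*$ as a convex combination of $s_i$ and the expected absorption value $y_i$, with weight $\alpha_i/(1-(1-\alpha_i)p_{ii})$ on $s_i$, observes that this weight is monotone in $\alpha_i$, and concludes that $z_i^*$ is optimized at an endpoint; the step from ``$z_i^*$ is optimized at an endpoint'' to ``the sum $\sum_j z_j^*$ is optimized at an endpoint'' is left implicit (it follows because each $z_j^*$, $j\neq i$, is a nondecreasing affine function of $z_i^*$ by conditioning the walk from $j$ on whether it hits $i$ before absorption). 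Your Sherman--Morrison computation reaches the same structural conclusion --- the objective is a fractional-linear, hence monotone, function of each single $\alpha_i$ on an interval avoiding its pole --- by purely algebraic means, and it has the advantage of acting directly on the full objective $\vec{1}^{\,T}z$, so the implicit step above is not needed; the algebra checks out (in particular $w^Te_i=\gamma$, which is what makes the $t^2$ terms cancel). The one loose end is the invertibility of $M_0=I-(I-D)P$: this corresponds to setting $\alpha_i=0$, which is \emph{not} covered by the convergence hypothesis you invoke (that only gives invertibility of $M(t)$ for $t\in[\ell,u]$ with $\ell>0$). It does hold here for a connected graph on at least two nodes, since $(I-D)P$ is then an irreducible substochastic matrix with some row sum strictly below one and thus has spectral radius below one; alternatively, and more cleanly, expand around an interior base point, $M(t)=M(t_0)+(t-t_0)e_ir_i^{T}$ with $t_0\in[\ell,u]$, which yields the same fractional-linear form with no extra argument. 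Both proofs share the final one-node-at-a-time sweep, and your closing remark about not being able to assign endpoints simultaneously is exactly right.
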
 

\begin{proof} 
Our proof relies on a random-walk interpretation of opinion dynamics \cite{ghaderi2013opinion,gionis2013opinion}. Before we give our proof, we describe 
this connection between absorbing random walks and opinion dynamics for completeness.  Gionis et al. \cite{gionis2013opinion} consider a random walk with absorbing states on an augmented graph $H=(X,R)$ that is constructed as follows: 
\begin{itemize}
\item[(i)]
The set of vertices $X$ of $H$ is defined as $X = V\cup {\copynodes}$,
where ${\copynodes}$ is a set of $n$ new nodes such that for each node $i \in V$ there is a copy $\sigma(i)\in\copynodes$;
\item[(ii)]
The set of edges $R$ of $H$ includes all the edges $E$ of $G$, plus a new set of edges between each node $i\in V$ and its copy $\sigma(i)\in\copynodes$.
That is, $R=E\cup {\copyedges}$, and ${\copyedges}=\{ (i,\sigma(i)) \mid  i\in V \}$;
\item[(iii)]
Finally, the weights of all edges in $R$ are set to 1, assuming that $G$ is unweighted. 
\end{itemize}

\noindent A key observation in \cite{gionis2013opinion} is that the
opinion vector at the equilibrium can be computed by performing the following
absorbing random walk on the graph $H$.  To determine the expressed opinion
of node $i \in V$, we start a random walk from $i$; with probability
$1 - \alpha_i$ we follow an incident edge chosen uniformly at random 
from $E$, and with probability $\alpha_i$ we follow the edge
from $i$ to $\sigma(i)$, ending the walk.
If the walk continues to a node $j$, we perform the same type of step ---
chosing an edge of $E$ incident to $j$ with probability $1 - \alpha_j$,
and moving to $\sigma(j)$ with probability $\alpha_j$ --- and iterate.
It is shown in \cite{gionis2013opinion} that this walk represents 
the opinion dynamics process in the following sense: the 
expressed opinion $z_i^*$ of node $i$ in the equilibrium of the 
process is equal to the expected value of the innate opinion
$s_j$ over all endpoints $\sigma(j)$ of the walk starting from $i$.

This implies that the equilibrium
opinion $z_i^*$ of node $i$ is given by the following general form:

\begin{equation} 
\label{eq:gionis}
z_i^* = \alpha_i s_i + (1-\alpha_i) p_{ii} z_i^* +  (1-\alpha_i) (1-p_{ii})y_i,
\end{equation}

\noindent where $p_{ii}$ is the probability that the random walk that starts at $i$ returns again to $i$ before absorption, and $y_i$ is the expected opinion if the random walk gets absorbed to a node $\sigma(j) \neq \sigma(i)$.

 We assume, without loss of generality, that the 
opinion optimization problem at hand is the opinion 
maximization problem, although the proof can easily be adopted 
to the opinion minimization case. 
Suppose we have a solution for the opinion maximization problem such
that $\alpha_i$ lies strictly between $\ell_i$ and $u_i$.
We wish to show that setting $\alpha_i$
to be either $\ell_i$ or $u_i$ will yield a sum of expressed opinions that
is at least as large. Proceeding in this way one node at a time,
we will get a solution in which all $\alpha_i$ are either $\ell_i$ or $u_i$.

By solving Equation~\eqref{eq:gionis} for
$z_i^*$ we obtain,
\begin{equation} 
\label{eq:combination}
z_i^*  = \frac{\alpha_i}{ 1-(1-\alpha_i)p_{ii}} s_i + \frac{(1-\alpha_i)(1-p_{ii})}{ 1-(1-\alpha_i)p_{ii}} y_i. 
\end{equation}
\noindent Therefore $z_i^*$  is the convex combination of two
probabilities $s_i, y_i$. Since we wish to maximize $z_i^*$,  if $s_i
\leq y_i$ we need to minimize $\dfrac{\alpha_i}{ 1-(1-\alpha_i)p_{ii}}
$; otherwise we will minimize $\dfrac{(1-\alpha_i)(1-p_{ii})}{
1-(1-\alpha_i)p_{ii}}$. Note that the function $g(x) =
\dfrac{x}{1-(1-x)p_{ii}}$ for $x \in [\ell_i,u_i]$ is monotone increasing,
and therefore maximized at $x = u_i$ and minimized at $x=\ell_i$.
Therefore, optimality is obtained by setting $\alpha_i$ to be either
$\ell_i$ or $u_i$. This yields a sum of expressed opinions that is at
least as large as the original solution.
\end{proof}

\section{Unbudgeted Opinion Optimization}
\label{sec:unbudgeted}

In this section, we show that when there are no constraints on the size of the target-set, both the opinion maximization and minimization problems can be solved in polynomial time. 

\subsection{Unbudgeted Opinion Minimization}

Recall from Section~\ref{sec:model}, $A=\diag{(\alpha)})$ is a diagonal matrix where entry $A_{i, i}$ corresponds to $\alpha_i$ and $P$ is the random walk matrix. 

We formalize the unbudgeted opinion minimization problem as follows. 

\begin{align}
\label{eq:opmin1}
\begin{array}{ll@{}ll}
\text{minimize}  &  \vec{1}^T z &    &  \\
\text{subject to} &z = (I - (I-A)P)^{-1} A s & & \\
                 &                A \succeq \ell I                                &   & \\  
                 &               A \preceq u I                                 & & \\ 
\end{array}
\end{align}

 We show that this problem can be solved in polynomial time.

\begin{theorem} 
\label{thm:main}
The {\em Opinion Minimization} formulation~(\ref{eq:opmin1}) is solvable in polynomial time.
\end{theorem}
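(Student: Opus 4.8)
The plan is to recognize opinion minimization as an MDP-style optimization: I will express its optimum as the \emph{least} fixed point of a Bellman-type operator and then capture that fixed point exactly by a linear program, which is solvable in polynomial time. First I would rewrite the equilibrium constraint. Multiplying through, $z = (I-(I-A)P)^{-1}As$ is equivalent to $(I-(I-A)P)z = As$, i.e. $z_i = (1-\alpha_i)(Pz)_i + \alpha_i s_i$ for every $i$ --- simply the stationarity condition of the dynamics~\eqref{eq:dynamics}. Hence the feasible set is exactly $\{z(\alpha):\alpha\in[\ell,u]^n\}$, where $z(\alpha)$ is the unique equilibrium for resistance vector $\alpha$, and we wish to minimize $\vec{1}^T z$ over it. For a trial opinion vector $z$ define
\[
F(z)_i \myeq \min_{\alpha_i\in[\ell,u]}\bigl[(1-\alpha_i)(Pz)_i + \alpha_i s_i\bigr] = \min\bigl\{(1-\ell)(Pz)_i+\ell s_i,\ (1-u)(Pz)_i+u s_i\bigr\},
\]
the second equality holding because the bracket is affine in $\alpha_i$ (the coordinatewise form of Theorem~\ref{thm:esv}).

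The key step is to study $F$. Since $P\ge 0$ entrywise and $1-\alpha_i\ge 0$, each coordinate $F(\cdot)_i$ is nondecreasing in $z$, so $F$ is order preserving; and since $P$ is row-stochastic and $\ell>0$, a one-line estimate gives $\|F(z)-F(z')\|_\infty \le (1-\ell)\,\|z-z'\|_\infty$, so $F$ is a contraction with a unique fixed point $z^\star$. I would then verify two things. First, $z^\star$ is feasible: letting $\alpha^\star_i$ be a minimizer in the definition of $F(z^\star)_i$ gives $z^\star = G_{\alpha^\star}(z^\star)$, where $G_\alpha(z)_i \myeq (1-\alpha_i)(Pz)_i + \alpha_i s_i$; since $G_{\alpha^\star}$ is itself a contraction its only fixed point is $z(\alpha^\star)$, so $z^\star = z(\alpha^\star)$. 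Second, $z^\star$ is optimal: for every $\alpha$ we have $z^\star = F(z^\star) \le G_\alpha(z^\star)$ coordinatewise (as $F$ is the coordinatewise minimum over $\alpha_i$), and iterating the monotone contraction $G_\alpha$ upward from $z^\star$ converges to $z(\alpha)$ while staying $\ge z^\star$; hence $z^\star \le z(\alpha)$ coordinatewise and $\vec{1}^T z^\star \le \vec{1}^T z(\alpha)$. So the optimum value is $\vec{1}^T z^\star$.

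It remains to compute $z^\star$ in polynomial time, and here I would use the linear program
\[
\max\ \vec{1}^T z \qquad \text{s.t.}\qquad z_i \le (1-\ell)(Pz)_i + \ell s_i,\quad z_i \le (1-u)(Pz)_i + u s_i \quad(\forall i).
\]
Its feasible region is $\{z: z\le F(z)\}$; it contains $z^\star$ (with equality), and for any feasible $z$ monotonicity gives $z\le F(z)\le F^2(z)\le\cdots\to z^\star$, so $z\le z^\star$ coordinatewise. Thus $z^\star$ dominates every feasible point and is the unique maximizer; since the LP has $n$ variables, $2n$ constraints, and polynomially bounded rational data, it is solvable in polynomial time, and an optimal $\alpha^\star$ is read off from $z^\star$ (with $\alpha^\star_i = (z^\star_i - (Pz^\star)_i)/(s_i - (Pz^\star)_i)$ whenever $s_i\neq (Pz^\star)_i$, and arbitrary otherwise). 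Equivalently one could just iterate $F$ to accuracy $\varepsilon$ in $O(\log(1/\varepsilon))$ rounds, but the LP gives an exact polynomial bound cleanly. Note the sign flip worth pointing out: opinion \emph{minimization} equals a \emph{maximization} LP --- the usual Bellman/MDP duality.

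The main obstacle is precisely the passage to a convex formulation. With $z$ as the variable, the equilibrium equation pins $z_i$ to one specific value $\alpha_i = (z_i-(Pz)_i)/(s_i-(Pz)_i)$, and requiring it to lie in $[\ell,u]$ is a sign-dependent disjunctive constraint, so the naive feasible region is a nonconvex union of polyhedra; meanwhile simply keeping the one-sided inequalities above makes the relaxation unbounded below. Seeing that the minimizing opinion vector is the \emph{least} fixed point of $F$ --- so that the two upper-bound families together with \emph{maximizing} $\vec{1}^T z$ pin it down exactly --- is the crux, and the monotonicity and contraction of $F$ are what make this go through. The companion \emph{Opinion Maximization} result follows symmetrically, reversing every inequality and taking the greatest fixed point of the analogous operator.
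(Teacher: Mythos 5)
Your proof is correct, but it takes a genuinely different route from the paper's. The paper substitutes $X=A^{-1}$, proves that $Z=X-(X-I)P$ has positive quadratic form on the box $\mathcal{K}$, shows via a second-derivative computation that $u^TZ^{-1}u$ is convex in $X$, rewrites the linear objective $\vec{1}^TZ^{-1}s$ in terms of such quadratic forms, and then invokes interior-point methods for convex programming. You instead exploit the Markov-decision-process structure hidden in the equilibrium condition $z_i=(1-\alpha_i)(Pz)_i+\alpha_i s_i$: your Bellman operator $F$ is order-preserving (this needs $u\le 1$, which holds in the model) and a $(1-\ell)$-contraction (this needs $\ell>0$ and row-stochasticity of $P$), its unique fixed point $z^\star$ is simultaneously coordinatewise below every achievable equilibrium $z(\alpha)$, and it is captured exactly by a polynomial-size LP --- with the sign flip from minimizing opinions to maximizing $\vec{1}^Tz$ over $\{z\le F(z)\}$, which you correctly identify as the step that makes the feasible region convex and the optimum well-defined. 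Every step checks out, including the extraction of an optimal $\alpha^\star$ with entries in $\{\ell,u\}$. What your approach buys: an exact LP rather than an approximately solved convex program; a strictly stronger structural conclusion (a single $\alpha^\star$ minimizes every coordinate $z_i$ simultaneously, hence any nonnegative weighting of the nodes, not just the uniform sum); and Theorem~\ref{thm:esv} falls out for free, since the affine bracket in $F$ is always minimized at an endpoint. What the paper's approach buys: convexity of $X\mapsto y^T(X-(X-I)P)^{-1}y$ is a statement that does not depend on the monotone, one-directional structure your argument leans on, so it could in principle accommodate objectives for which no least-fixed-point characterization exists. For this particular objective, yours is the more elementary and sharper argument.
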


\noindent Before we proceed with the proof of Theorem~\ref{thm:main}, we rewrite 
the objective of formulation~(\ref{eq:opmin1}) in a way that is convenient for algebraic 
manipulation.  Define $X = A^{-1}$. Then,

\begin{align*}
  (I - (I-A)P)^{-1} A &=  (I - (I-A)P)^{-1} (A^{-1})^{-1} =  (I - (I-A)P)^{-1} X^{-1} \\ 
  &= (X - (X-XA)P)^{-1} = (X-(X-I)P)^{-1}.  
\end{align*}

Therefore, we rewrite our original formulation to optimize over  the set of diagonal matrices $X$ whose entries lie in $[ {\frac{1}{u},\frac{1}{\ell}} ]$.

\begin{align}
\label{eq:opmin}
\begin{array}{ll@{}ll}
\text{minimize}  & \displaystyle \vec{1}^T   (X-(X-I)P)^{-1} s &  \\
\text{subject to}        &                                                X \succeq \frac{1}{u} I  & \\  
                 &                                                X \preceq \frac{1}{\ell} I& \\ 
                 &                                                X \text{~~diagonal} & \\ 
\end{array} 
 \end{align}

\begin{lm}
The following set is convex:  
$$\mathcal{K} = \{ X \in \field{R}^{n\times n}:X \text{~diagonal}, \frac{1}{u}\leq x_{ii} \leq \frac{1}{\ell} \text{~for all~} i \in [n] \}.$$
\end{lm}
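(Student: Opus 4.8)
The plan is to verify the definition of convexity directly: take two arbitrary matrices $X, Y \in \mathcal{K}$ and a scalar $\lambda \in [0,1]$, and show that $Z := \lambda X + (1-\lambda) Y$ again belongs to $\mathcal{K}$. There are exactly two properties to check for membership in $\mathcal{K}$: that $Z$ is diagonal, and that each diagonal entry $z_{ii}$ lies in $[\tfrac1u, \tfrac1\ell]$.

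First I would observe that the set of diagonal $n\times n$ matrices is a linear subspace of $\field{R}^{n\times n}$, so any linear combination of diagonal matrices is diagonal; in particular $Z = \lambda X + (1-\lambda)Y$ is diagonal. Second, for each $i \in [n]$ we have $z_{ii} = \lambda x_{ii} + (1-\lambda) y_{ii}$, a convex combination of the two reals $x_{ii}, y_{ii} \in [\tfrac1u, \tfrac1\ell]$; since a closed interval in $\field{R}$ is convex, $z_{ii} \in [\tfrac1u, \tfrac1\ell]$ as well. Hence $Z \in \mathcal{K}$, which establishes convexity.

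There is essentially no obstacle here: $\mathcal{K}$ is nothing more than a box (a product of intervals) sitting inside the space of diagonal matrices, so it is the intersection of the linear subspace of diagonal matrices with finitely many halfspaces $\{X : x_{ii} \le \tfrac1\ell\}$ and $\{X : x_{ii} \ge \tfrac1u\}$, each of which is convex, and convexity is preserved under intersection. If anything needs care, it is only to note that the constraints $X \succeq \tfrac1u I$ and $X \preceq \tfrac1\ell I$ from formulation~(\ref{eq:opmin}), when restricted to diagonal $X$, are exactly equivalent to the entrywise bounds $\tfrac1u \le x_{ii} \le \tfrac1\ell$, so that $\mathcal{K}$ is precisely the feasible region of that formulation. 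I would state the argument in the two-line form above rather than invoking the intersection-of-halfspaces fact, since the direct convex-combination check is shortest.
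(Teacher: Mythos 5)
Your proof is correct and follows essentially the same route as the paper's: both take an arbitrary convex combination $\lambda X + (1-\lambda)Y$ of two elements of $\mathcal{K}$, note that diagonality is preserved under linear combinations, and check that each diagonal entry is a convex combination of reals in $[\tfrac{1}{u},\tfrac{1}{\ell}]$ and hence stays in that interval. The additional remarks about $\mathcal{K}$ being a box and an intersection of halfspaces are fine but not needed.
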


\begin{proof} 
Let $X_1,X_2 \in \mathcal{K}$, $\lambda \in (0,1)$. Matrix $Y = \lambda X_1 + (1-\lambda)X_2$ is clearly diagonal, and any diagonal entry $y_{ii}= \lambda x_{ii}^{1} + (1-  \lambda) x_{ii}^{2}$ satisfies,
$$\frac{1}{u}\leq y_{ii} \leq \frac{1}{\ell}.$$
Therefore, $Y \in \mathcal{K}$, which implies the claim.
\end{proof}

\begin{lm}
\label{psd}
All eigenvalues of matrix  $Z = (X - (X-I)P), X \in \mathcal{K}$ are strictly positive reals (in general $Z$ is not  symmetric). Furthermore, $Z$ is positive definite, i.e., for any $y \in \field{R}^n$ the quadratic form $y^T Z y$ is non-negative.
\end{lm}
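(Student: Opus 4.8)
The plan is to bypass the non-symmetry of \(Z=X-(X-I)P\) by conjugating it, via \emph{diagonal} matrices, into a genuinely symmetric matrix with the same spectrum, and then to pin down that spectrum by a Loewner-order comparison. Care is needed precisely because of the parenthetical in the statement: \((X-I)P\) is a product of a diagonal matrix and the (non-symmetric) random-walk matrix, so \(Z\) is not symmetric and we cannot talk about real or positive eigenvalues off the shelf.

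First I would symmetrize the random-walk part. Write \(P=D^{-1}W\), with \(W\) the adjacency matrix of \(G\) and \(D=\diag(\deg)\), and set \(\hat P:=D^{1/2}PD^{-1/2}=D^{-1/2}WD^{-1/2}\). Then \(\hat P\) is real symmetric, and \(I-\hat P=D^{-1/2}(D-W)D^{-1/2}\) is the normalized graph Laplacian, hence positive semidefinite; so \(\hat P\preceq I\). Since \(X\) and \(D\) are diagonal and therefore commute, conjugating \(Z\) by \(D^{1/2}\) gives
\[
D^{1/2}ZD^{-1/2}=X-(X-I)\hat P=:\tilde Z ,
\]
which has exactly the spectrum of \(Z\). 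Now every \(X\in\mathcal K\) satisfies \(x_{ii}=1/\alpha_i\ge 1\) (because \(\alpha_i\le 1\)), so \(X-I\succeq 0\). Suppose for the moment that \(X-I\succ 0\), i.e. \(\alpha_i<1\) for all \(i\), and let \(C:=(X-I)^{1/2}\), a positive diagonal matrix. Conjugating \(\tilde Z\) by \(C\), and again using that diagonal matrices commute,
\[
C^{-1}\tilde Z\,C=X-C\hat P C=:M .
\]
Here \(M\) is symmetric (both \(X\) and \(C\hat P C\) are), so the eigenvalues of \(Z\) coincide with those of \(M\) and are in particular real. Moreover \(\hat P\preceq I\) gives \(C\hat P C\preceq C^2=X-I\), whence \(M=X-C\hat P C\succeq X-(X-I)=I\). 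Thus every eigenvalue of \(Z\) is real and at least \(1\); this proves they are strictly positive.

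To finish I would remove the temporary hypothesis \(X-I\succ 0\), i.e. allow some \(\alpha_i=1\). The cleanest route is continuity: for \(\varepsilon>0\) the matrix \(X+\varepsilon I\) has \(X+\varepsilon I-I\succ 0\), so by the previous step the spectrum of \(Z_\varepsilon:=(X+\varepsilon I)-(X+\varepsilon I-I)P\) lies in \([1,\infty)\); eigenvalues vary continuously with the entries and \(Z_\varepsilon\to Z\), so \(\operatorname{spec}(Z)\subseteq[1,\infty)\) too. (Alternatively, permute coordinates so the indices with \(\alpha_i=1\) come first; then \(\tilde Z\) is block lower-triangular, with an identity block for those indices and, on the remaining ones, a block of exactly the form \(X'-(X'-I)\hat P'\) with \(\hat P'\) a principal submatrix of \(\hat P\) — still symmetric and \(\preceq I\) — to which the non-degenerate argument applies.) Consequently \(Z\) is invertible, so the objective \(\vec 1^{T}(X-(X-I)P)^{-1}s\) of formulation~(\ref{eq:opmin}) is well defined on all of \(\mathcal K\), and \(Z\) is similar to the symmetric positive-definite matrix \(M\succeq I\), which is the sense in which \(Z\) is ``positive definite''; I would take care to state this as positive stability / invertibility rather than as nonnegativity of the raw quadratic form \(y^{T}Zy\), which a similarity transformation need not preserve.

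I expect the one real obstacle to be the bookkeeping of the similarities: the conjugation by \((X-I)^{1/2}\) is what turns \(\tilde Z\) into a symmetric matrix, but it degenerates exactly on the indices with \(\alpha_i=1\), so the boundary of \(\mathcal K\) must be handled by a limiting argument (or the block-triangular reduction) rather than by the same formula. Everything else — the symmetry of \(\hat P\), the Laplacian inequality \(\hat P\preceq I\), and the congruence inequality \(C\hat PC\preceq C^2\) — is routine.
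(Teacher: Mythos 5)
Your handling of the first claim is correct and follows essentially the same skeleton as the paper's proof: conjugate by $D^{1/2}$ to symmetrize the random-walk part, then by $(X-I)^{1/2}$ to reach a genuinely symmetric matrix with the same spectrum. You improve on the paper in two places. First, the Loewner comparison $C\hat P C\preceq C^2$, giving $M\succeq I$ and hence $\mathrm{spec}(Z)\subseteq[1,\infty)$ directly, replaces the paper's Gershgorin step and is cleaner. Second, you explicitly repair the degeneracy at $\alpha_i=1$, where $(X-I)^{1/2}$ is singular and the paper's similarity chain (which conjugates by $(X-I)^{-1/2}$) is not literally valid; your limiting or block-triangular argument closes a hole the paper leaves open.

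The second claim is where the substantive divergence lies: you decline to prove that $y^TZy\ge 0$ for all $y$ and propose to restate the lemma as positive stability. Your instinct that similarity does not control the quadratic form is right, but the situation is worse than you indicate: the claim as written is \emph{false}. Take $G=K_{1,5}$ (center $1$, five leaves), $\ell=1/3$, $u=1$, and $X=\diag(1,3,3,3,3,3)\in\mathcal K$. Then the center row of $Z$ is $e_1^T$, each leaf row $j$ has $Z_{jj}=3$ and $Z_{j1}=-2$, and for $y=(1,\tfrac12,\dots,\tfrac12)$ one computes $y^TZy=1+5\cdot\tfrac12\cdot(-\tfrac12)=-\tfrac14<0$, even though $\mathrm{spec}(Z)=\{1,3\}\subseteq[1,\infty)$. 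The paper's own derivation of the quadratic-form bound breaks when the vertex sum $\sum_i(x_{ii}-1)\bigl(y_i^2-\tfrac{y_i}{\deg(i)}\sum_{j\sim i}y_j\bigr)$ is rewritten as an edge sum without the $1/\deg$ weights, and again when it invokes $\tfrac{1}{\deg(i)}+\tfrac{1}{\deg(j)}\ge 2$, which holds only if both degrees equal $1$. So no proof of the statement as printed is possible, and your weakened version (real spectrum in $[1,\infty)$, hence invertibility) is the correct content. Do note, however, that the subsequent convexity theorem uses precisely the quadratic-form version (it deduces $v^TZ(\lambda)^{-1}v\ge 0$ from ``$Z(\lambda)$ is positive definite''), so adopting the restatement is not a drop-in replacement: that downstream argument needs its own repair, e.g.\ by exploiting that $Z(\lambda)=X(\lambda)-(X(\lambda)-I)P$ with $X(\lambda)\in\mathcal K$ and rerunning the analysis through the symmetrized matrices rather than through $y^TZy$.
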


\noindent Notice that the above two statements are not simultaneously true for non-symmetric matrices: a non-symmetric matrix with all eigenvalues positive is not necessarily positive definite.  On the other hand, if a non-symmetric matrix is positive definite, all its eigenvalues have to be positive.   
 
 \begin{proof}  
 First we prove that all eigenvalues are real. We provide a sequence of similarity transformations that prove $Z = (X-(X-I)P)$ is similar to a symmetric matrix.  To simplify notation, when $U = R^{-1} A R$, where $R$ is a diagonal matrix,  we write $U \equiv A$.

 \begin{align*}
X - (X-I)P &\equiv X-D^{+\tfrac{1}{2}} \( {  (X-I) D^{-1}A } \) D^{-\frac{1}{2}}  \\ 
&= X-  (X-I) D^{-\frac{1}{2}}AD^{-\frac{1}{2}}   
   \\
&\equiv X-  (X-I)^{-\frac{1}{2}} (X-I) D^{-\frac{1}{2}}AD^{-\frac{1}{2}}   (X-I)^{+\frac{1}{2}}   \\
&=X-  (X-I)^{+\frac{1}{2}}   D^{-\frac{1}{2}}AD^{-\frac{1}{2}}   (X-I)^{+\frac{1}{2}} 
 \end{align*}
 
 \noindent Since the adjacency matrix $A$ is symmetric, the latter matrix is also symmetric, and therefore has real eigenvalues.
 
 To show all eigenvalues are positive, we invoke the Gershgorin circle theorem.  Notice that $Z_{ii}=x_{ii}$ (under our assumption that the graph is simple) and the sum of non-diagonal entries in row $i$ is $-(x_{ii}-1)$. Therefore, every eigenvalue of $Z$ lies in at least one of the Gershgorin discs:
\begin{align*}
 |\lambda - x_{ii}| \leq (x_{ii}-1) &\Leftrightarrow   1 \leq \lambda \leq 2x_i -1.
\end{align*}
 
We now show  that the quadratic form associated with $Z$ is non-negative.  Recall first     that $x_{ii} \geq 1$ and $\deg(i)\geq 1$, therefore  $$ \frac{1}{\deg(i)}+ \frac{1}{\deg(j)} \geq 2.$$   

 \noindent For any $y \in \field{R}^n$,

\begin{eqnarray*}
y^T Z y &=& y^T (X - (X-I)P)  y = y^T X y - y^T  (X-I)y  \\  
& =& \sum\limits_{i=1}^n x_{ii} y_i^2 - \sum\limits_{i=1}^n (x_{ii}-1) \frac{y_i}{\deg(i)} \sum\limits_{j \sim i} y_j\\
& =& \sum\limits_{i=1}^n  y_i^2 + \sum\limits_{i=1}^n (x_{ii}-1) \( { y_i^2 - \frac{y_i}{\deg(i)} \sum\limits_{j \sim i} y_j } \) \\   
& =& \sum\limits_{i=1}^n  y_i^2 + \sum\limits_{(i,j) \in E(G)}  (x_{ii}-1) \( { y_i^2 +y_j^2 - (\frac{y_iy_j}{\deg(i)}+ \frac{y_iy_j}{\deg(j)} ) } \)  \\
& \geq& \sum\limits_{i=1}^n  y_i^2 + \sum\limits_{(i,j) \in E(G), y_iy_j \geq 0}  (x_{ii}-1) (y_i^2+y_j^2-2y_iy_j) + \\
 & & \sum\limits_{(i,j) \in E(G), y_iy_j < 0}  (x_{ii}-1)    \( { y_i^2 +y_j^2 + (\frac{|y_iy_j|}{\deg(i)}+ \frac{|y_iy_j|}{\deg(j)} ) } \)\\  
 & = & \sum\limits_{i=1}^n  y_i^2 + \sum\limits_{(i,j) \in E(G), y_iy_j \geq 0}  (x_{ii}-1) (y_i-y_j)^2 + \\
  & & \sum\limits_{(i,j) \in E(G), y_iy_j < 0}  (x_{ii}-1)    \( { y_i^2 +y_j^2 + (\frac{|y_iy_j|}{\deg(i)}+ \frac{|y_iy_j|}{\deg(j)} ) } \) \\  
& \geq & 0
\end{eqnarray*}
 \end{proof}

\noindent We proceed to prove that the objective is convex.  

\begin{theorem} 
The matrix $(X-(X-I)P)^{-1}$ is a convex function of $X \in \mathcal{K}$. 
\end{theorem}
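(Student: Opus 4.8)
The plan is to prove matrix convexity through the second-order condition, exploiting that $Z(X):=X-(X-I)P$ is \emph{affine} in the diagonal matrix $X$. Fix a diagonal direction $H$ and put $\varphi(t)=Z(X+tH)^{-1}$. Since $\dot Z:=\tfrac{d}{dt}Z(X+tH)=H(I-P)$ is constant in $t$ and $\ddot Z=0$, differentiating $\varphi(t)\,Z(X+tH)=I$ twice gives $\ddot\varphi(0)=2\,Z^{-1}\dot Z\,Z^{-1}\dot Z\,Z^{-1}$ with $Z=Z(X)$. So it is enough to show $v^{T}Z^{-1}\dot Z\,Z^{-1}\dot Z\,Z^{-1}v\ge 0$ for every $v$ and every diagonal $H$ --- equivalently, that $X\mapsto v^{T}Z(X)^{-1}v$ is convex for each fixed $v$, which is matrix convexity of $Z(X)^{-1}$.

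For the Hessian I would use three facts already in hand. First, $Z-I=(X-I)(I-P)$, so $\dot Z=H(I-P)=H(X-I)^{-1}(Z-I)=G(Z-I)$ with $G:=H(X-I)^{-1}$ diagonal (valid when all $x_{ii}>1$; the face $x_{ii}=1$ is recovered by continuity), which collapses the Hessian to $2\,Z^{-1}G\,(I-Z^{-1})\,G\,(I-Z^{-1})$, written only through $Z^{-1}$ and an arbitrary diagonal $G$. Second, by Lemma~\ref{psd}, $Z$ is similar via a positive diagonal matrix to the symmetric positive definite matrix $\widehat Z=I+(X-I)^{1/2}L(X-I)^{1/2}$, where $L=I-D^{-1/2}\mathcal{A}D^{-1/2}\succeq 0$ is the symmetric normalized Laplacian ($D$ the degree matrix, $\mathcal{A}$ the adjacency matrix). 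Third, $Z(X)^{-1}=(I-(I-A)P)^{-1}A$ is the absorption-probability matrix of the random walk of Theorem~\ref{thm:esv}, hence $Z(X)^{-1}\ge 0$ entrywise and $Z(X)^{-1}\vec 1=\vec 1$. The model to imitate is the classical proof that $M\mapsto M^{-1}$ is operator convex on symmetric positive definite matrices, namely $v^{T}M^{-1}v=\max_w\{2v^{T}w-w^{T}Mw\}$ --- a pointwise supremum of functions affine in $M$. After passing to the similar symmetric form the interior inverses become $\widehat Z^{-1}$, and the aim is to present $v^{T}Z^{-1}G(I-Z^{-1})G(I-Z^{-1})v$ as a sum of nonnegative squares in $L$ and the positive weights $x_{ii}-1$, mirroring the edge-by-edge regrouping (splitting on the sign of $y_iy_j$ and completing squares) used in the proof of Lemma~\ref{psd}.

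The step I expect to be the main obstacle is exactly that $Z(X)$ is non-symmetric and the diagonal conjugator that symmetrizes it depends on $X$, so it does not commute with the differentiation above and one cannot literally ``reduce to the symmetric case.'' This is not a formality: a generic affine family of non-symmetric matrices with all eigenvalues real and positive --- indeed even with positive definite symmetric part --- can fail to have a convex inverse, so the proof has to use the particular shape of $Z(X)$ (the off-diagonal part is $-(X-I)P$ with the \emph{same} $P$ for all $X$, together with $x_{ii}\ge 1$). Concretely, one is reduced to showing $v^{T}Z^{-1}G(I-Z^{-1})G(I-Z^{-1})v\ge 0$ for all diagonal $G$; I would attack this with the substitution $u=Z^{-1}v$ (so $Zu=v$), use the reversibility of $P$ with respect to the degree measure (equivalently the symmetry of $L$) to push the asymmetry into the diagonal weights, and then check that the surviving cross terms collect into nonnegative squares. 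Carrying that bookkeeping through the $X$-dependent rescalings is where essentially all of the effort lies.
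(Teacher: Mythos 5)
Your setup is the same as the paper's: both arguments exploit that $Z(X):=X-(X-I)P$ is affine in $X$, differentiate $Z(\lambda)^{-1}$ twice along a line, and reduce matrix convexity to the inequality $u^{T}Z^{-1}\dot Z\,Z^{-1}\dot Z\,Z^{-1}u\ge 0$. The difference is what happens next. The paper simply rewrites this quantity as $2v^{T}Z^{-1}v$ and invokes positive definiteness of $Z^{-1}$ --- an identification that is only legitimate when $Z$ and $\dot Z$ are symmetric, which they are not here (the left and right ``halves'' of the sandwich are transposes of each other only in the symmetric case). You correctly refuse to make that move, flag it as the main obstacle, and propose a plan (diagonal symmetrization, $u=Z^{-1}v$, edge-by-edge regrouping) without carrying it out. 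So as written your argument is incomplete: everything you actually prove is the routine part, and the flagged step is the entire content of the theorem.

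That gap cannot be closed, because the statement is false; your own caveat that affine families of non-symmetric matrices need not have convex inverses applies to this very family. Take $G=K_{2}$, so $P=\left(\begin{smallmatrix}0&1\\1&0\end{smallmatrix}\right)$, $Z=\left(\begin{smallmatrix}x_{1}&-(x_{1}-1)\\-(x_{2}-1)&x_{2}\end{smallmatrix}\right)$, $\det Z=x_{1}+x_{2}-1$, hence $(Z^{-1})_{11}=x_{2}/(x_{1}+x_{2}-1)$, which for fixed $x_{1}>1$ is strictly \emph{concave} in $x_{2}$. Concretely, with $X_{1}=\diag(2,1)$, $X_{2}=\diag(2,3)$ (both in $\mathcal{K}$ for $u=1$, $\ell\le 1/3$) and $\lambda=\tfrac12$,
\begin{equation*}
\tfrac12(Z_{1}^{-1})_{11}+\tfrac12(Z_{2}^{-1})_{11}=\tfrac12\cdot\tfrac12+\tfrac12\cdot\tfrac34=\tfrac58<\tfrac23=\big(Z(\tfrac12)^{-1}\big)_{11},
\end{equation*}
so $e_{1}^{T}\big[\tfrac12 Z_{1}^{-1}+\tfrac12 Z_{2}^{-1}-(\tfrac12 Z_{1}+\tfrac12 Z_{2})^{-1}\big]e_{1}<0$ and matrix convexity fails; the scalar objective $\vec 1^{T}Z^{-1}s$ fails convexity on the same instance with $s=e_{1}$. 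Consequently no amount of the bookkeeping you describe will produce a sum of nonnegative squares --- the terms genuinely have both signs --- and any proof of polynomial-time solvability of formulation~(\ref{eq:opmin1}) has to proceed by a route other than convexity of this map. Your instinct about where the difficulty lies was exactly right; the conclusion to draw from it is that the theorem, not just the proof, is the problem.
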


\begin{proof}
Let $X_1,X_2 \in \mathcal{K}$, and let $Z_i:=(X_i-(X_i-I)P)$ for $i=1,2$.  Also, fix $\lambda \in (0,1)$, and let $Z(\lambda) = \lambda Z_1 + (1-\lambda)Z_2$. We prove the following statement, that implies that $(X-(X-I)P)^{-1}$:

$$ \lambda Z_1^{-1}+ (1-\lambda) Z_2^{-1}   \succeq (\lambda Z_1 + (1-\lambda)Z_2)^{-1}.$$

\noindent Since $Z_1,Z_2$ are positive definite, we obtain that for any $y \in \field{R}^n$  

$$y^TZ(\lambda)y = \lambda    y^T Z_1 y + (1-\lambda) y^T Z_2 y \geq 0,$$  

\noindent so  $Z(\lambda)$ is positive definite, and so is $Z(\lambda)^{-1}$, 

$$ Z Z^{-1} = I \Rightarrow \frac{dZ}{d\lambda}      Z^{-1} + Z  \frac{dZ^{-1}}{d\lambda} = 0  \Rightarrow   \frac{dZ^{-1}}{d\lambda}  = -Z^{-1}\frac{dZ}{d\lambda}      Z^{-1}.$$
 
\noindent By differentiating twice $Z(\lambda)$ with respect to $\lambda$, we also obtain 
$ \frac{d^2Z}{d\lambda^2}=0$.  Therefore, the second derivative of the inverse is given by,

$$  \frac{d^2Z^{-1}}{d\lambda^2}  =
- \frac{d Z^{-1}}{d\lambda} \frac{dZ}{d\lambda} Z^{-1} - Z^{-1}\frac{dZ}{d\lambda}\frac{d Z^{-1}}{d\lambda}  = 
2Z^{-1}\frac{dZ}{d\lambda}Z^{-1}\frac{dZ}{d\lambda}Z^{-1}.$$

\noindent Pick any non-zero vector $u$  and consider 
following pair of vector/matrix valued functions:

$$ v(\lambda) = \frac{dZ}{d\lambda}\frac{dZ^{-1}}{d\lambda} u\quad\quad\text{ and }\quad\quad
\varphi(\lambda) = u^T Z^{-1}(\lambda) u.$$

\noindent The second derivative of $\varphi(\lambda)$  satisfies the following inequality:

$$ \varphi(\lambda) = u^T  \frac{d^2Z^{-1}}{d\lambda^2} u = 2 v^T(\lambda) Z^{-1}(\lambda) v(\lambda) \ge 0,$$

\noindent since $Z^{-1}$ is  positive definite. Therefore $\varphi(\lambda)$ is a convex function of $\lambda$ for $\lambda \in (0,1)$.  Thus, for any such $\lambda$

\begin{align}&(1-\lambda)\varphi(0) + \lambda\varphi(1) - \varphi(\lambda) \ge 0\\
&\iff u^T \left[ (1-\lambda) X^{-1} + \lambda Y^{-1} - ((1-\lambda) X + \lambda Y)^{-1}\right] u \ge 0. 
\end{align}

\noindent This implies  that $ \lambda Z_1^{-1} + (1-\lambda) Z_2^{-1}  - (\lambda Z_1+(1-\lambda) Z_2  )^{-1}$ is positive semidefinite, and thus our claim that $(X-(X-I)P)^{-1}$ is a convex function holds. 
\end{proof}

\noindent We can prove that $\Big((X-(X-I)P)^{-1}\Big)^T$ is also a convex function in a similar way. We therefore obtain the following corollary: 

\begin{cor} 
\label{maincor}
For any fixed $y \in \field{R}^n$, we can optimize  the quadratic forms,

$$ y^T(X-(X-I)P)^{-1}y, \text{~~~} y^T\Big( (X-(X-I)P)^{-1} + \big( y^T\Big( (X-(X-I)P)^{-1}\big)^T \Big)y$$

\noindent  over all diagonal matrices $X \in \mathcal{K}$  in polynomial time.  
\end{cor}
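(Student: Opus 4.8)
The plan is to read the corollary off from the convexity theorem just established, together with a routine application of first‑order convex optimization. Fix $y\in\field{R}^{n}$, write $Z(X)=X-(X-I)P$, and set $g(X)=y^{T}Z(X)^{-1}y$. The first observation is that $Z(X)=\diag(x)(I-P)+P$ is affine in the diagonal entries $x=(x_{11},\dots,x_{nn})$ of $X$, so for $X_{1},X_{2}\in\mathcal{K}$ and $\lambda\in(0,1)$ one has $Z(\lambda X_{1}+(1-\lambda)X_{2})=\lambda Z(X_{1})+(1-\lambda)Z(X_{2})$; feeding the vector $y$ into the matrix inequality proved in the previous theorem, $\lambda Z_{1}^{-1}+(1-\lambda)Z_{2}^{-1}\succeq(\lambda Z_{1}+(1-\lambda)Z_{2})^{-1}$, immediately gives
$$g\bigl(\lambda X_{1}+(1-\lambda)X_{2}\bigr)=y^{T}\bigl(\lambda Z_{1}+(1-\lambda)Z_{2}\bigr)^{-1}y\ \le\ \lambda\,y^{T}Z_{1}^{-1}y+(1-\lambda)\,y^{T}Z_{2}^{-1}y=\lambda g(X_{1})+(1-\lambda)g(X_{2}).$$
Hence $g$ is a convex scalar function on the convex body $\mathcal{K}$, and the second quadratic form in the statement equals $2g(X)$ (a scalar equals its transpose), so it is convex as well. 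Minimizing these forms over $\mathcal{K}$ is therefore a convex program, and the remaining task is to certify a polynomial‑time solution for it.

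For that I would assemble the three standard ingredients needed to run the ellipsoid method (or an interior‑point method) on $\min_{X\in\mathcal{K}}g(X)$. First, the feasible set: identifying a diagonal matrix with its diagonal, $\mathcal{K}$ is the box $[1/u,\,1/\ell]^{n}$, which is full‑dimensional, is caught between Euclidean balls of polynomial radius, and has a one‑line separation oracle. Second, conditioning: the quadratic‑form computation inside the proof of Lemma~\ref{psd} shows not merely $y^{T}Z(X)y\ge 0$ but $y^{T}Z(X)y\ge\|y\|_{2}^{2}$ for every $X\in\mathcal{K}$, since the term $\sum_{i}y_{i}^{2}$ there is exactly $\|y\|_{2}^{2}$ and the remaining sum is nonnegative; as $v^{T}Z(X)v\ge\|v\|_{2}^{2}$ forces $\|Z(X)v\|_{2}\ge\|v\|_{2}$ by Cauchy--Schwarz, the matrix $Z(X)$ is invertible with $\|Z(X)^{-1}\|_{2}\le 1$ throughout $\mathcal{K}$, so $g$ is finite and $|g(X)|\le\|y\|_{2}^{2}$ on $\mathcal{K}$. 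Third, a first‑order oracle: from $\partial Z/\partial x_{i}=e_{i}e_{i}^{T}(I-P)$ together with $\partial Z^{-1}/\partial x_{i}=-Z^{-1}(\partial Z/\partial x_{i})Z^{-1}$ (the identity already used in the previous proof),
$$\frac{\partial g}{\partial x_{i}}\;=\;-\,\bigl(Z(X)^{-T}y\bigr)_{i}\;\bigl((I-P)\,Z(X)^{-1}y\bigr)_{i},$$
so the gradient is computed with the two linear solves $Z(X)^{-1}y$ and $Z(X)^{-T}y$, and is bounded in norm by a polynomial in $\|y\|_{2}$ and $\|I-P\|_{2}\le 1+\sqrt{n}$. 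With (i)--(iii) in hand, the ellipsoid method computes $\min_{X\in\mathcal{K}}g(X)$ to any additive accuracy $\varepsilon$ in time polynomial in $n$, the bit‑length of the input, and $\log(1/\varepsilon)$; and since the minimum is attained at a vertex of the box, whose coordinates lie in $\{1/u,1/\ell\}$ with controlled bit‑length, an exact optimizer over rational data is recovered by rounding a sufficiently accurate approximate solution.

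I expect the only genuine work to lie in the second paragraph's bookkeeping: extracting the uniform estimate $y^{T}Z(X)y\ge\|y\|_{2}^{2}$, and hence the uniform bound $\|Z(X)^{-1}\|_{2}\le 1$, cleanly from the proof of Lemma~\ref{psd}; and the bit‑complexity argument that upgrades an $\varepsilon$‑approximate optimizer to an exact vertex optimizer. The conceptual core --- that the quadratic form is a convex function of $X$ --- is handed to us for free by the preceding theorem, and it is this convexity that makes the (minimization) direction of the corollary a polynomial‑time problem; if only the approximate version is needed downstream, the rounding step can be dropped and the first two paragraphs suffice.
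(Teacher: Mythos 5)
Your argument is correct for the claim as the paper intends it, and it follows essentially the same route: the paper also derives scalar convexity of $y^{T}Z(X)^{-1}y$ directly from the matrix inequality of the preceding theorem, notes that the symmetrized form is handled the same way, and then appeals to off-the-shelf convex optimization (it cites interior-point methods and leaves implicit the oracle, conditioning, and affineness details that you spell out). Your observation that $Z(X)=X(I-P)+P$ is affine in $X$, so that $\lambda Z_{1}+(1-\lambda)Z_{2}=Z(\lambda X_{1}+(1-\lambda)X_{2})$, is a point the paper glosses over but genuinely needs, and the uniform bound $\|Z(X)^{-1}\|_{2}\le 1$ you extract from Lemma~\ref{psd} is a clean certification of well-conditioning. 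One caveat on your final rounding step: the minimum of a convex function over a box is \emph{not} in general attained at a vertex (that is a property of convex maximization, not minimization), so the phrase ``since the minimum is attained at a vertex'' does not follow from the convexity you established. The claim does happen to hold here, because by Sherman--Morrison the restriction of $g$ to any single coordinate $x_{ii}$ has the form $(a+bx_{ii})/(c+dx_{ii})$ and is therefore monotone in each coordinate separately --- this is essentially the content of Theorem~\ref{thm:esv} --- but as written that step is unjustified. Since you correctly note the rounding step can be dropped for the approximate version, which is all the paper claims or uses downstream, this does not affect the corollary.
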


\noindent We can use off-the-shelf interior point methods for convex optimization, see \cite[Chapter 11]{boydbook}. To conclude the proof of Theorem~\ref{thm:main}, we need to prove that our objective can be brought in the quadratic form of Corollary~\ref{maincor}. 

\begin{proof}[Proof of Theorem~\ref{thm:main}] 
For simplicity, let $Z=(X-(X-I)P)$. We consider the quadratic,

\begin{align*}
(s+ \vec{1})^T   Z^{-1} (s+ \vec{1}) &= s^TZ^{-1} s + \vec{1}^T Z^{-1} \vec{1}+ 
 \vec{1}^T Z^{-1} s + 
s^T Z^{-1} \vec{1} \\ 
 &=s^TZ^{-1} s + n + \vec{1}^T Z^{-1} s + s^T \vec{1} \Leftrightarrow\\ 
 \vec{1}^T Z^{-1} s &= (s+ \vec{1})^T   Z^{-1} (s+ \vec{1}) - s^TZ^{-1} s  - C \\ 
 &=  \langle \underbrace{  (s+ \vec{1}) (s+ \vec{1})^T   - ss^T}_{Q}, Z^{-1} \rangle - C, 
\end{align*}
 
\noindent where $C=n+\sum\limits_{i=1}^n s_i$ is a constant that does not depend on the variable diagonal matrix $X$.  Furthermore, $Q$ is symmetric, i.e., $Q=Q^T$. Let  $Q=\sum_{i=1}^n\nu_i q_i q_i^T$ be the eigendecompositon of $Q$. We observe that 

\begin{align*}
\textrm{trace}(Z^{-1}Q) &= \textrm{trace}\left(Z^{-1}\left(\sum_{i=1}^n\nu_i q_i q_i^T\right)\right) = \textrm{trace}\left(\sum_{i=1}^n\nu_iZ^{-1} q_i q_i^T\right)\\
&= \sum_{i=1}^n\nu_i \textrm{trace}\left( Z^{-1} q_i q_i^T\right) = \sum_{i=1}^n\nu_i \textrm{trace}\left(q_i^T Z^{-1} q_i \right)\\
&= \sum_{i=1}^n\nu_i \left(q_i^T Z^{-1} q_i \right) \tag{*} =  \sum_{i=1}^n\nu_i \left(q_i^T \frac{1}{2}\big( Z^{-1}+(Z^{-1})^T\big) q_i \right) \\ 
&= \frac{1}{2} \textrm{trace}\left( \big( Z^{-1}+(Z^{-1})^T\big) Q\right) 
\end{align*}
 
 \noindent In (*), we used the fact that for any square matrix $A$ 
 
$$ x^TAx = \frac{1}{2}x^T(A+ A^T)x.$$  

\noindent Therefore, we conclude that our objective is equivalent to minimizing 
 $$ \frac{1}{2} \langle  (s+ \vec{1}) (s+ \vec{1})^T   - ss^T, Z^{-1}+(Z^{-1})^T \rangle,$$

 \noindent  subject to $X \in \mathcal{K}$. By  Corollary~\ref{maincor}, we conclude that  the opinion minimization problem is convex, and thus solvable in polynomial time.
\end{proof}

\subsection{Unbudgeted Opinion Maximization}

We also prove that the opinion maximization problem can also be solved in polynomial time using machinery that we have already developed. We state the problem as:

\begin{align*}
\begin{array}{ll@{}ll}
\text{maximize}  & \displaystyle \vec{1}^T   (X-(X-I)P)^{-1} s &  & \\
\text{subject to}        &X \succeq \frac{1}{u} I  & \\  
                 & X \preceq \frac{1}{\ell} I& \\ 
                 & X \text{~~diagonal} & \\ 
\end{array} 
 \end{align*}

\begin{theorem}
We can solve the opinion maximization problem by invoking our opinion minimization algorithm using as our vector of initial opinions $1-s$.
\end{theorem}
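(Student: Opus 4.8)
The plan is to reduce opinion maximization to opinion minimization by exploiting the fact that the linear operator sending innate opinions to expressed opinions is row-stochastic, so that the total sum of opinions obtained from $s$ and the one obtained from $\vec{1}-s$ always add up to a constant that does not depend on the resistance parameters.

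First I would establish the key identity $M(\alpha)\vec{1} = \vec{1}$, where $M(\alpha) := (I-(I-A)P)^{-1}A$. This is a one-line check: it is equivalent to $A\vec{1} = (I-(I-A)P)\vec{1}$, and since $P$ is a random-walk matrix we have $P\vec{1} = \vec{1}$, so the right-hand side equals $\vec{1} - (I-A)\vec{1} = A\vec{1}$. (Equivalently, this is immediate from the absorbing-random-walk interpretation used in the proof of Theorem~\ref{thm:esv}: each expressed opinion $z_i^*$ is an expectation of innate opinions, so each row of $M(\alpha)$ is a probability vector.) Consequently $\vec{1}^T M(\alpha)\vec{1} = n$ for every feasible $\alpha$, i.e., for every $X \in \mathcal{K}$.

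Next, writing $f(s,\alpha) = \vec{1}^T M(\alpha) s$ and using the identity above, I get
\[
f(s,\alpha) + f(\vec{1}-s,\alpha) \;=\; \vec{1}^T M(\alpha)\vec{1} \;=\; n
\]
for all $\alpha$. Hence $\max_{\alpha} f(s,\alpha) = n - \min_{\alpha} f(\vec{1}-s,\alpha)$, and any resistance vector attaining the minimum of $f(\vec{1}-s,\cdot)$ attains the maximum of $f(s,\cdot)$. Since $s \in [0,1]^n$ forces $\vec{1}-s \in [0,1]^n$, the vector $\vec{1}-s$ is a legitimate innate-opinion input, and the convexity machinery behind Theorem~\ref{thm:main} goes through unchanged for it (the argument only used that the associated matrix $Q$ is symmetric, which holds for any innate vector). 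Therefore one runs the polynomial-time minimization algorithm on the instance with innate opinions $\vec{1}-s$, keeps the returned $\alpha$, and this solves the maximization problem.

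The proof is short, so there is no real obstacle; the only point requiring care is the row-stochasticity claim $M(\alpha)\vec{1}=\vec{1}$, which is what makes $f(s,\alpha)+f(\vec{1}-s,\alpha)$ a constant independent of the optimization variable. Everything else — feasibility of $\vec{1}-s$ and reuse of the convexity analysis — is routine.
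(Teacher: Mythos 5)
Your proof is correct and takes essentially the same route as the paper: both hinge on the identity that the map from innate to expressed opinions fixes $\vec{1}$ (the paper verifies $(X-(X-I)P)\vec{1}=\vec{1}$ using $P\vec{1}=\vec{1}$, which is your row-stochasticity claim for $M(\alpha)$ in the substituted variable $X=A^{-1}$), and both then conclude $f(s,\alpha)+f(\vec{1}-s,\alpha)=n$ so that minimizing over $\vec{1}-s$ maximizes over $s$. Your added remarks on the feasibility of $\vec{1}-s$ and the reuse of the convexity analysis are correct but do not change the argument.
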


\begin{proof}
By Lemma~\ref{psd} and since $P\vec{1}=\vec{1}$,  for any $X \in \mathcal{K}$, 

\begin{align*}
&(X-(X-I)P) \vec{1} = X \vec{1} - (X-I) \vec{1} =  I \vec{1}  =\vec{1}\\
& \Leftrightarrow (X-(X-I)P)^{-1} \vec{1}  = \vec{1}. 
\end{align*}

\noindent Therefore, the opinion minimization objective that uses 
as an initial opinions the vector $\vec{1}-s$ becomes

\begin{align*}
\vec{1}^T   (X-(X-I)P)^{-1}  (\vec{1}-s) &=  \vec{1}^T   (X-(X-I)P)^{-1}  \vec{1} - \\ 
\vec{1}^T (X-(X-I)P)^{-1} s &= n - \vec{1}^T   (X-(X-I)P)^{-1} s. 
\end{align*}

\noindent Therefore, minimizing this objective is equivalent to maximizing 
$ \vec{1}^T * (X-(X-I)P)^{-1} s$ subject to $X \in \mathcal{K}$. 
\end{proof} 

We close this section by answering the following question: how much
can our intervention increase the total sum of opinions in a network?
As we have seen, Example \ref{eg:motivation} shows a way of modifying
resistance parameters to create an unbounded increase in the sum of 
expressed opinions.  Thus, the optimal modification to the resistance
parameters can produce arbitrarily large increases to this sum.



\section{Budgeted Opinion Optimization}
\label{sec:budgeted}
We now consider the setting where there is a constraint on the size of the target-set. That is, we want to identify a set $T \subseteq V$ of size $k$ 
such that changing the resistance parameters of agents in $T$ optimally maximizes (resp. minimizes) the sum of expressed opinions in equilibrium. 
Recall from Section \ref{sec:model} that we denote by $f(s, \alpha)$ the sum of expressed opinions under the given vector of innate values and resistance parameters. We denote by $f(s, \alpha, T)$ the sum of expressed opinions where we can set the $\alpha_i$ of the agents $i \in T$ optimally. We simply refer to this by $f(T)$ when there is no ambiguity. 

The objective function for finding the optimal target-set for opinion maximization is therefore  $\max_{|T| = k}f(T)$, 
and, respectively, $\min_{|T| = k} f(T)$ for opinion minimization.
Note, the objective function is monotone in the target-set since we
can always set the choice of $\alpha_i$ to be the original $\alpha_i$
for any node $i \in T$.
But, can we use submodular optimization tools for our budgeted opinion
optimization problems (Problems~\ref{probmax}, and \ref{probmin})? We
show that, unlike in the case in other standard opinion optimization or
influence maximization settings, neither submodularity nor
supermodularity holds. We prove this result for the case of opinion
maximization, but similar claims can be made for the minimization
problem too by setting $s = 1-s$.

\begin{eg}
Consider the complete graph on three nodes where $ s  = (1, 0, 0)$ and $\alpha = (0.1, 0.1, 0.1)$. The total sum of opinions in equilibrium is $1$. Suppose we are allowed to set resistance values in $[0.001, 1]$. Given target-sets $T' \subseteq T$ and $x \notin T'$, submodularity would give that, 
$$f(T \cup \{x \}) - f(T) \leq f(T' \cup \{x\}) - f(T').$$
Set $T = \{1, 2\}$,  $T' =\{1\}$, and $ x = \{3\}$. For each of these
sets, the optimal solution would set $\alpha_1 = 1$, $\alpha_2 =
0.001$ and $\alpha_3 = 0.001$. Submodularity would require

$$
f(1, 2, 3) - f(1, 2) \leq f(1, 3)  - f(1),
$$

but with these numbers, the left-hand side of the inequality is
0.191 and the right-hand is 0.168; 
this submodularity does not hold.

To show that supermodularity does not hold,  
set $T = \{1\}$, $T'  = \emptyset$, and $x = \{3\}$. 
Then, supermodularity would require that

$$
f(3) - f(\emptyset)  \leq f(1, 3)  - f(1),
$$

the left-hand side of the inequality is 1.493 
and the right-hand side is 0.168.

\noindent Hence, our objective function is neither submodular nor supermodular in the target set. 

\end{eg}

Highlighting a difference with the unbudgeted version, we show that
Problems~\ref{probmax}, and \ref{probmin} are \NPhard by a reduction
from the vertex cover problem on $d$-regular graphs.
To prove this, we adapt a construction for a different
opinion maximization problem given by 
\cite{gionis2013opinion} in the context of their model. 
We only present the proof for the opinion maximization problem;
as the example above, the proof can easily be adapted for opinion minimization
as well.

\begin{theorem}
The budgeted opinion maximization (and similarily opinion minimization) problem is \NPhard. 
\end{theorem}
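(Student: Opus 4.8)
The plan is a polynomial-time reduction from \textsc{Vertex Cover} on $d$-regular graphs (\NPhard for $d\ge3$), adapting the construction of \cite{gionis2013opinion} for their ``fix an agent's expressed opinion to $1$'' problem. The bridge between the two operations is the equilibrium identity $z=(I-(I-A)P)^{-1}As$: setting $\alpha_i=1$ forces $z_i=s_i$, so ``target a node that has innate opinion $1$ and original resistance $<1$, and raise its resistance to the top of the allowed interval'' precisely simulates pinning that node's expressed opinion to $1$. Given a $d$-regular $G=(V,E)$, $|V|=n$, and an integer $k$, I would build the following opinion-maximization instance on a graph $H$: take $G$ itself, set $s_v=1$ for every $v\in V$, attach to each $v$ a single pendant leaf with innate opinion $0$, give every node of $H$ the same original resistance $\alpha_0$, set the admissible range $[\ell,u]=[\alpha_0,1]$, set the budget to $n-k$, and compute an explicit threshold $\tau=\tau(n,d,\alpha_0,k)$. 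The claim to establish is that $H$ admits a solution of value $\ge\tau$ iff $G$ has an independent set of size $\ge n-k$, i.e.\ iff $G$ has a vertex cover of size $\le k$.

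First I would reduce the optimization to a combinatorial question. By Theorem~\ref{thm:esv} an optimal solution may be taken to use only the extreme resistances $\alpha_0$ (a no-op) and $1$, so a solution is determined by a set $T'$ of nodes whose expressed opinion is pinned to their innate value; since pinning a node with innate opinion $0$ to that value only lowers expressed opinions and wastes budget, I may assume $T'\subseteq V$, and by the monotonicity of $f$ in the target set (noted above) I may assume $|T'|=n-k$. For a fixed $T'\subseteq V$ the equilibrium is essentially explicit: each leaf $\ell$ with parent $v$ has $z_\ell=(1-\alpha_0)z_v$, so the objective equals $(2-\alpha_0)\sum_{v\in V}z_v$; writing $z_v=\zeta+\delta_v$, where $\zeta$ is the common ``nobody pinned'' value and $\delta_v=1-\zeta$ for $v\in T'$, the corrections obey the linear system $\delta_v=\gamma\sum_{w\sim v}\delta_w$ over $v\notin T'$, for a small coupling constant $\gamma=\gamma(\alpha_0,d)$ with $\gamma d<1$. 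Expanding $(I-\gamma B_{V\setminus T'})^{-1}$ as a Neumann series, the objective becomes a value depending only on $n,k,\alpha_0$, plus a correction whose leading ($t=0$) part is $-2(2-\alpha_0)\gamma(1-\zeta)\,e(T')$, where $e(T')$ is the number of edges of $G$ inside $T'$, plus a tail of order $\gamma^2$ times a polynomial in $n$. Hence, over all $T'$ with $|T'|=n-k$, the objective is (to leading order) maximized exactly when $e(T')=0$, i.e.\ when $T'$ is an independent set, and I would take $\tau$ to be that leading-order value minus half the per-edge gap.

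The technical heart --- and the expected main obstacle --- is making the ``$e(T')=0$ versus $e(T')\ge1$'' separation rigorous: one must choose $\alpha_0$ so that the leading signal $2(2-\alpha_0)\gamma(1-\zeta)$ provably dominates the Neumann tail, uniformly over all $\binom{n}{k}$ candidate sets. Here the fact that each $z_i$ is an expectation of innate opinions in $[0,1]$ (the absorbing-walk reading of \eqref{eq:gionis}) keeps the whole vector bounded and yields a clean geometric estimate $\sum_{t\ge1}\gamma^{t+1}(1-\zeta)\,\vec{1}^{\,T}B_{V\setminus T'}^{\,t}b\le(1-\zeta)\,n\,(\gamma d)^2/(1-\gamma d)$ for the tail, where $b_v=|N(v)\cap T'|$; choosing $\alpha_0$ within $1/\mathrm{poly}(n)$ of $1$ (so that $\gamma d<1/\mathrm{poly}(n)$) makes this smaller than half the signal, while $\zeta$, $\gamma$, and $\tau$ remain rationals of polynomial bit-length, so the reduction is polynomial. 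This quantitative bookkeeping is exactly where the \cite{gionis2013opinion} analysis is reused. Finally, the minimization version follows from the same reduction after replacing $s$ by $\vec{1}-s$ and using $(I-(I-A)P)^{-1}\vec{1}=\vec{1}$, so that the objective becomes $n$ minus the maximization objective, exactly as noted earlier in the paper.
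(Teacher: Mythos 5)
Your reduction is correct in outline and targets the same source problem (vertex cover on $d$-regular graphs, adapting \cite{gionis2013opinion}), but the gadget and the analysis are genuinely different from the paper's. The paper attaches $2\sqrt{d}$ duplicates to each node, half with resistance $1$ and half with resistance $0$, and sets $\alpha_i=0$ on $V$; because all resistances are extreme, the equilibrium is exactly computable, the threshold $\theta$ is a clean closed form, the target set is the vertex cover itself (so every uncovered node sees all $d$ neighbors at opinion $1$), and the constraint $\alpha_i>0$ is restored only at the end by a continuity argument that perturbs the $0$'s to $\varepsilon$. You instead attach a single leaf, give every node the same resistance $\alpha_0<1$, target the \emph{complement} of the cover (an independent set of size $n-k$), and detect edges inside the target set via the first-order term $-2(2-\alpha_0)\gamma(1-\zeta)e(T')$ of a Neumann expansion, choosing $\alpha_0=1-1/\mathrm{poly}(n)$ so the geometric tail $(1-\zeta)n(\gamma d)^2/(1-\gamma d)$ is provably below half the per-edge signal. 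What your route buys is that it never leaves the feasible region $(0,1]$, so no continuity step is needed, and the gadget is smaller; what it costs is that the threshold separation is perturbative rather than exact, so the quantitative bookkeeping (the tail bound, the bit-length of $\zeta,\gamma,\tau$, and the exclusion of leaves and undersized target sets via monotonicity and Theorem~\ref{thm:esv}) carries the whole weight of the argument. I checked these estimates and they go through --- in particular $\vec{1}^T b = d(n-k)-2e(T')$ for a $d$-regular graph and the tail is nonnegative, so ``$e(T')=0$'' versus ``$e(T')\ge 1$'' is cleanly separated by $\tau$ --- but to be a complete proof rather than a plan you would still need to write out $\zeta$, $\gamma$, and $\tau$ explicitly and verify the leaf-exclusion swap argument, which plays the role of the paper's argument that the optimal target set avoids the duplicate nodes.
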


\begin{proof}
Given a $d$-regular graph $G=(V, E)$ and an integer $K$, the vertex
cover problem asks whether there exists a set of nodes $S$ of size at
most $K$ such that $S$ is a vertex cover. An instance of the decision
version of opinion maximization consists of a graph $G'$, where each
agent $i$ has an internal opinion $s_i$ and resistance parameter
$\alpha_i$, an integer $k$ and a threshold $\theta$. The solution to
this decision version is ``yes" if and only if there exists a set $T$
of size at most $k$
whose resistance values we can optimally set such that $\ell \leq
\alpha \leq u $ and $f(T) \geq \theta$. We first
prove this hardness result by letting the $\alpha_i$ of the target-set
be anywhere in $[0, 1]$. We then prove that the reduction continues to
hold for the case where $\alpha_i \in (0, 1]$.

Suppose we are given an instance of the vertex cover problem for
regular graphs. We construct an instance of the decision version of
the opinion maximization problem as follows: set $G' = (V \cup V', E
\cup E')$ where $V'$ consists of $2 \sqrt{d}$ nodes for each node $i
\in V$ (called duplicate nodes). The set $E'$ is the edges of the form
$(i, i')$ joining nodes in $V$ to their corresponding duplicate nodes.
We set $s_i = 1$ and $\alpha_i = 0$ for all $i \in V$. We also set
$s_{i'} = 0$ for all $i' \in V'$. For the set of duplicate nodes, we
set the resistance value of half (that is, $\sqrt{d}$ nodes) to be $1$
and the other half to be $0$. We call the former the stubborn
duplicate nodes, and the latter the flexible duplicate nodes. We set
$k = K$ and $\theta = (\sqrt{d} + 1)k + \sqrt{d} (n -k) \frac{d }{d +
\sqrt{d}}$. This gives us a decision version of each of the opinion
maximization problem. 

We now want to show that a set $T$ of at most $k$ nodes in $V$ is a
vertex cover of $G$ if and only if we can set the resistance values of the
nodes in $T$ (as a subset of $G'$) to satisfy the decision version of
the opinion maximization problem.

Suppose that $T$ is a vertex cover of $G$. Then, it satisfies the
decision problem above by setting the $\alpha_i$ of the agents in $T$
to be $1$. Indeed, the agents $i \in T$ will have expressed opinions of $1$,
as will all the flexible duplicate nodes neighboring the agents in
$T$. All other nodes $j \in V$ will have all $d$ of their neighbors in
$V$ having opinion $1$ since $T$ is a vertex cover. Therefore, they
will have a final expressed opinion $\frac{d}{d + \sqrt{d}}$. Their
$\sqrt{d}$ flexible duplicate nodes will take on this value as well,
while their stubborn duplicate nodes will maintain their innate
opinion of $0$. Therefore, we get a final sum of $(\sqrt{d} +1 ) k +
\sqrt{d} (n - k) \frac{d}{d + \sqrt{d}}$. Therefore, $T$ corresponds
to a target-set whose resistance we can set optimally such that $f(T)
\geq \theta$.

To see the reverse direction, we want to show that if $T$ is not a
vertex cover of $G$, then we cannot set the resistance values of
agents in $T$ such that $f(T) \geq \theta$. If $T$ is not a vertex
cover, then there exists an edge $(i, j) \in E$ such that $i, j \notin
T$. Therefore, the expressed opinion of both $i$ and $j$ will be at
most $\frac{d-1}{d + \sqrt{d}}$. All the agents in $T$ will maintain
their high opinion of $1$, and all other nodes $\ell \neq i, j$ will
have an expressed opinion of at most $\frac{d}{d +  \sqrt{d}}$, by the
same argument as above. Taking into account the duplicate nodes, we
have, $$f(T) \leq (\sqrt{d}+1)k + \sqrt{d} (n - k - 2) \frac{d}{d
+ \sqrt{d}} + 2\sqrt{d}\frac{d-1}{d + \sqrt{d}} < \theta.$$
Let $\theta'$ denote the middle term in this chain of inequalities;
we have thus shown that if $G$ does not have a $k$-node vertex cover,
then for any set $T$ of at most $k$ nodes, we have 
$f(T) \leq \theta' < \theta$.

What remains to show is there exists an optimal target-set that consists of only nodes in $V$ and not in $V'$, where we assume that this target-set is at most of size $k \leq n$ for $n = |V|$. Denote the optimal target-set in by $T^*$. Suppose there is a node $i' \in T$ which is a node in $V'$. Let $i'$ be adjacent to a node $i \in V$. If $i'$ is a flexible duplicate node, then it is clear that the expressed opinions of both $i'$ and $i$ will be $1$, if we include $i$ in the target-set instead. (Or simply remove $i'$ from the target-set if $i$ is already in $T$.)
Suppose $i'$ is a stubborn duplicate node. First, assume that $i \in T$. Consider $T = T \backslash\{i'\} \cup \{ j \}$, where $j \notin T$. The expressed opinion of $j$, and therefore of all its corresponding flexible duplicate nodes, with $T$ as the target-set is at most $\frac{d}{d + \sqrt{d}}$. With $j$ in the target-set, the expressed opinion goes to $1$. We therefore need to show that the loss in opinion from removing $i'$ from the target-set, is at most the gain by adding $j$ to the target-set. Indeed, it is straightforward to verify that $1  \leq \({1 - \frac{d}{d + \sqrt{d}} } \) \( \sqrt{d} + 1\) $.

\hide{
\begin{align*}
1  \leq \({1 - \frac{d}{d + \sqrt{d}} } \) \( \sqrt{d} + 1\) &\Rightarrow \frac{1}{\sqrt{d} + 1} \leq \frac{\sqrt{d}}{d+ \sqrt{d}}\\
& \Rightarrow d + \sqrt{d} \leq d + \sqrt{d}
\end{align*}
} 
Now, consider $i'$ is a stubborn duplicate node and $i \notin T$. We compare $T$ to $T' = T \backslash \{i'\} \cup \{i\}$. Suppose there are $h$ nodes adjacent to $i$ in $V'$ with resistance parameter $0$. (In particular, $h \geq \sqrt{d} + 1$.) The expressed opinion of $i$ with $T$ as the target-set is at most $\frac{d}{d + 2\sqrt{d} - h}$, and is shared by all $h$ of the duplicate nodes. Under $T'$, $i$ will have an expressed opinion of $1$, as will all $h$ of the nodes with $\alpha = 0$, except for node $i'$ which will now have resistance value of $1$. Thus, we need to show that  $$1 - \frac{d}{d + 2 \sqrt{d} - h} \leq h \bigg( 1 - \frac{d}{d + \sqrt{d} - h}  \bigg).$$ This  holds since $h \geq 1$. 

We now show how to modify the proof for the case in which
the stubbornness values belong to the interval $[\varepsilon,1]$
rather than $[0,1]$, for some $\varepsilon > 0$.
The key point is that when we view $f(T)$ as a function of the 
stubbornness parameters in the network $G'$ constructed
in the reduction, it is a continuous function.
Thus, if we let $\gamma = \theta - \theta' > 0$, we can choose
$\varepsilon > 0$ small enough that the following holds.
If we take all nodes $i$ in the previous reduction with $\alpha_i = 0$,
and we push their stubbornness values up to $\alpha_i = \varepsilon$, then
(i) if $T$ is a $k$-node vertex cover in $G$, we have 
$f(T) \geq \theta - \gamma/3$, and 
(ii) if $G$ does not have a $k$-node vertex cover, for any 
set $T$ of at most $k$ nodes, we have $f(T) \leq \theta' + \gamma/3$. 
Since $\theta - \gamma/2$ lies between these two bounds, it follows
that with stubbornness values in the interval $[\varepsilon,1]$,
there is a $k$-node vertex cover $T$ in the original instance if and only if
there is a set $T$ in the reduction to opinion maximization with
$f(T) \geq \theta - \gamma/2$.
This completes the proof with stubbornness values in the 
interval $[\varepsilon,1]$. 
\end{proof}

In light of this hardness result, and inspired by Theorem \ref{thm:esv}, we propose a greedy heuristic for the budgeted opinion optimization problems. The version shown below is for opinion maximization, but can be converted to the opinion minimization problem by setting $s = 1-s$. In each round, the heuristic chooses the node that gives the largest marginal gain $f(T \cup \{ v  \})-f(T)$. The algorithm requires $O(n^3)$ time to compute $f(\emptyset)$ using standard Gauss-Jordan elimination for matrix inversion, and then it performs $O(nk)$ loops, that each requires $O(n^2)$ time.  To see why $O(n^2)$ time suffices instead of the straightforward $O(n^3)$ time of matrix inversion, suppose that we know $(I-(I-A)P)^{-1}$ and we wish to change the stubbornness of node $i$ from $\alpha_i$ to be $1$. The new stubbornness matrix is $A'= A + x_ix_i^T$, where $x_i \in \field{R}^n$ has $x_i(j)=0$ for all $j \neq i$, and $x_i(i) = \sqrt{ 1-\alpha_i}$. Instead of computing the inverse $(I-(I-A')P)^{-1}$, we can use the Sherman-Morrison formula for the rank-1 update $(I-(I-A)P + x_i (x_i^TP))^{-1}$, which leads to a run-time of $O(kn^3)$ instead of $O(kn^4)$.

\begin{algorithm}[!ht]
\label{greedy1} 
\floatname{algorithm}{Greedy Heuristic}  
\renewcommand{\thealgorithm}{}  
\caption{}  
\begin{algorithmic}[1] 
\REQUIRE $k \geq 1$ 
\STATE  $S \leftarrow \emptyset$
\STATE Compute $f(\emptyset) = (I - (I-A)P)^{-1}As$
\FOR{$i \leftarrow 1$ to $k$ }
\STATE $max \leftarrow -\infty$ 
	\FOR {$ v \in V \backslash T$ }
	\STATE Compute $f(T \cup \{ v  \})$
	 \IF{ $f(T \cup \{ v  \})-f(T) \geq max$}
	 \STATE  winner$\leftarrow v$ 
	 \STATE $max \leftarrow f(T \cup \{  v \}) -f(T) $
	 \ENDIF
	\ENDFOR
\STATE $T \leftarrow T \cup \text{winner}$
\ENDFOR 
\STATE Return $T$. 
\end{algorithmic}
\end{algorithm}

In the next section, we implement this heuristic and compare its performance against some natural baselines on real and synthetic values on real-world networks.

\section{Experimental results}
\label{sec:exp}
\spara{Experimental setup.} Table~\ref{tab:datasets} shows the datasets we used in our experiments.  We use several datasets as sources of real network topologies. The Twitter dataset is associated with real-world opinions obtained as described by De et al. \cite{de2014learning}.  For our purposes, we take the first expressed opinion by each agent to be the innate opinion, and we normalize these values, which fall in $[-1, 1]$ in the original dataset, to fall in $[0,1]$. The sum of the innate opinions is $347.042$ with a mean of $0.633$, while the sum of the final expressed opinions observed in the network during this period is $367.149$ with a mean of $0.670$, suggesting that the sentiment shifted in a positive manner.  For a detailed description of the dataset collection, see \cite{de2014learning}. 

\begin{table}[!ht]
\begin{center}
\begin{tabular}{|l|c|c|c|c|} \hline
Name            & $n$  & $m$  & Description & Source\\ \hline \hline
Monks     &  18 & 41 &   Social & \cite{monk1}  \\ \hline 
Karate     & 33 & 78  &   Social & \cite{karate} \\ \hline 
LesMiserables     & 77   &  254 & Co-occurance & \cite{les}  \\ \hline 
NetScience   &  379 & 914 &  Collaboration  & \cite{netsci} \\\hline  
Email Network & 986 & 16064 & Social & \cite{email1,email2} \\ \hline \hline
Twitter & 548 & 3638 & Social Media & \cite{de2014learning} \\ \hline \hline 
\end{tabular}
\end{center}
\caption{\label{tab:datasets} Description of datasets.}
\end{table}

For the remaining networks, we set the innate opinions uniformly at random in $[0,1]$ for the budgeted experiments and uniformly at random as well as power law for the unbudgeted experiments. We set the slope for the power law distribution to be $2$. For all graphs, we  select each resistance parameter $\alpha_i$ uniformly at random in $[0.001, 1]$.

For the budgeted version of our problem, we present the performance of the  heuristics for the three largest networks (i.e., Twitter, Email Network, and NetScience). The results are qualitatively similar for the rest of the networks. We compare the performance of our greedy heuristic with two  baselines that output a set of $k$ nodes whose resistance parameter is set to $1$. Baseline I chooses  $k$ agents with the highest innate opinions (ties broken arbitrarily).  Baseline II assigns to each node $i$ a score $\text{score}(i)= \frac{\text{deg}(i)}{2m} \times \frac{s_i}{\sum_{j \in N(i)}s_j}$ and chooses the $k$ agents with the highest score. The intuition for this baseline is that we would like to pick a node that has a high centrality and relatively high opinion compared to their neighbors and fix this agent to have an $\alpha$ of 1. We consider budget values $k$ in $\{1, 2, \cdots, 100\}$. We compare the sum of expressed opinions without intervention as well as the outputs of the greedy heuristic and the two baselines. All simulations run on a laptop with 1.7 GHz Intel Core i7 processor and 8GB of main memory.\footnote{ The code for the budgeted and unbudgeted experiments can be found in \url{https://github.com/redabebe/opinion-optimization} and \url{https://github.com/tsourolampis/opdyn-social-influence}, respectively.}

 \begin{algorithm}[H]
\label{greedy1} 
\floatname{algorithm}{Baseline I}  
\renewcommand{\thealgorithm}{}  
\caption{}  
\begin{algorithmic}[1] 
\REQUIRE $k \geq 1$ 
\STATE Let  $s_1 \geq s_2 \geq \ldots \geq s_n$. 
\STATE Return $T \leftarrow  \{1,\ldots,k\}$ 
\end{algorithmic}
\end{algorithm}

 \begin{algorithm}[H]
\label{greedy1} 
\floatname{algorithm}{Baseline II}  
\renewcommand{\thealgorithm}{}  
\caption{}  
\begin{algorithmic}[1] 
\REQUIRE $k \geq 1$ 
\STATE Let $\text{score}(i) \leftarrow \frac{\text{deg}(i)}{2m} \times \frac{s_i}{\sum_{j \in N(i)}s_j}$  for all $i \in V$.
\STATE Sort the nodes such that $\text{score}(1) \geq \text{score}(2) \geq \ldots \geq \text{score}(n)$ 
\STATE Return $T \leftarrow  \{1,\ldots,k\}$ 
\end{algorithmic}
\end{algorithm}

\spara{Experimental findings.} Figures~\ref{fig:budgeted}(a), (b), (c) show the results obtained by  our greedy heuristic, and the two baseline methods for the Twitter, Email, and NetScience networks respectively.  All methods achieve a significant improvement compared to having no interventions even for small values of $k$. Recall that for the Twitter
network, we use estimates of real opinions for the agents. Homophily by opinions is
observed and impacts the relatively higher performance of these baselines compared to
the Email and NetScience networks where the opinion values are randomly generated.
Baseline I and Baseline II show similar performance for the Twitter and NetScience
networks. One underlying explanation for the relative under-performance of Baseline
II compared to Baseline I in the Email network is that this network has skewed degree
distribution which might inflate the scores of higher degree nodes. We note that the run 
times for the Twitter, Email, and NetScience networks are 12.0, 3.4, 
and 111.5 seconds per k, respectively.

\begin{figure*}[!ht]
\centering
\includegraphics[scale=0.35]{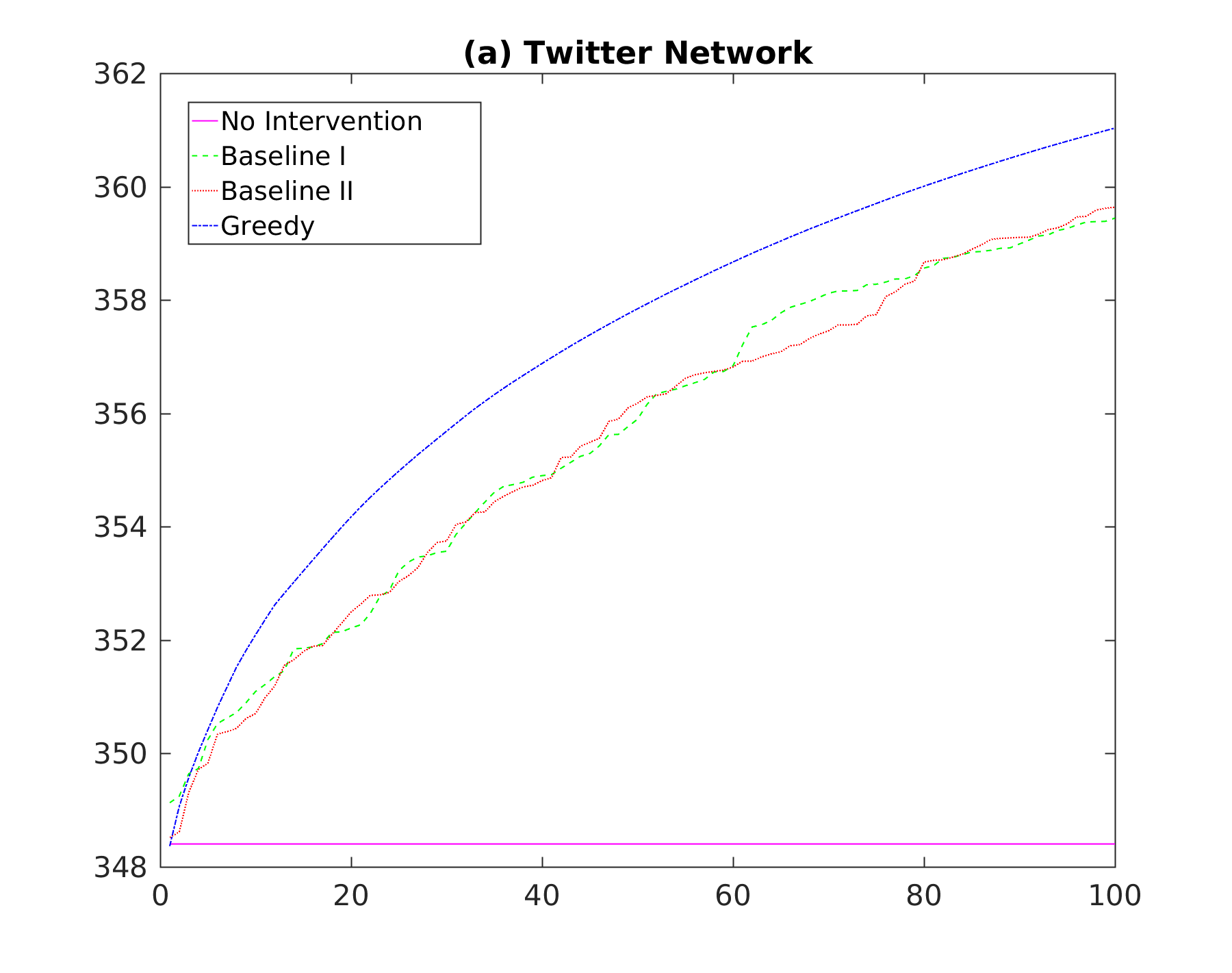} 
\includegraphics[scale=0.175]{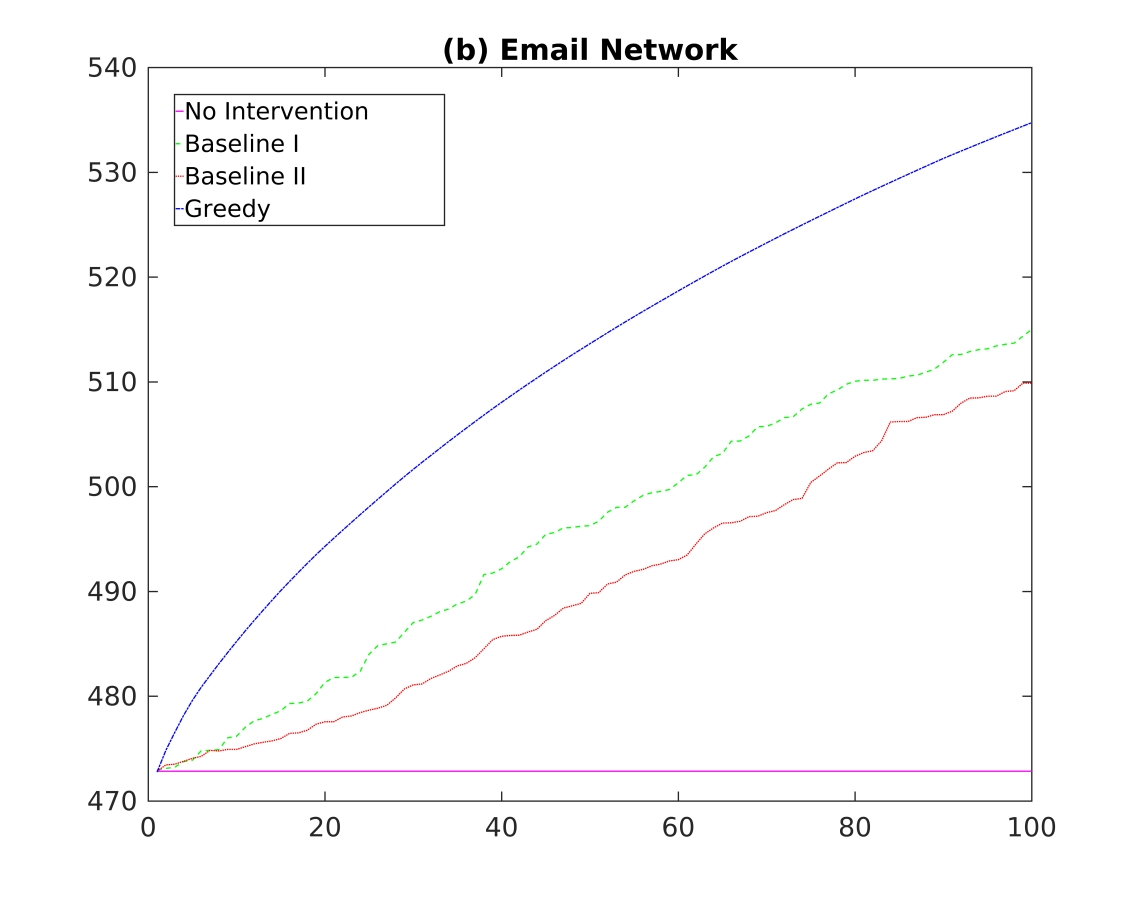}  
\includegraphics[scale=0.35]{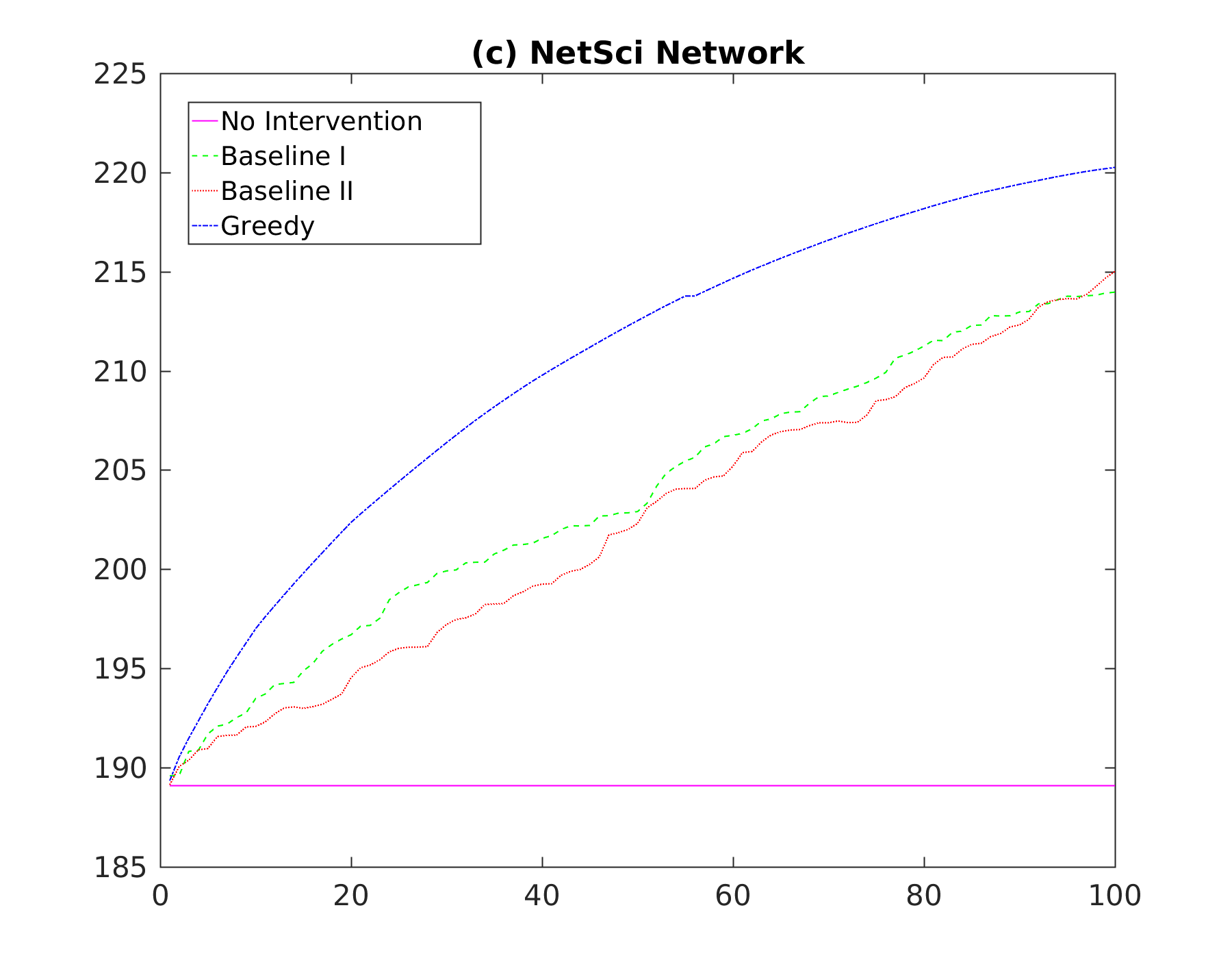} \\  
\caption{\label{fig:budgeted} Total sum of opinions at equilibrium vs. budget $k$ for the budgeted opinion maximization problem using no intervention, Baseline I, Baseline II, and our greedy heuristic for (a) Twitter, (b) Email,  and (c) NetScience networks respectively. 
}
\end{figure*}

Table~\ref{tab:synthetic} shows our experimental findings for the unbudgeted opinion minimization, and maximization  problems. We report the total sum $\vec{1}^Ts$ of innate opinions, the total sum $\vec{1}^Tz$ of opinions at equilibrium (no intervention), the total sum $\vec{1}^Tz_{\text{min}}$ of opinions produced by our minimization algorithm, and the total sum $\vec{1}^Tz_{\text{max}}$ of opinions produced by our  maximization algorithm. The resistance parameters for each of the networks are chosen uniformly at random from $[0,1]$. We then let $\ell = 0.001$ and $u = 1$, which serve as an upper and lower bound for the resistance parameters. Note that the reported values are the average of 5 trials. In addition to the synthetically generated opinion values reported above, we also run the opinion optimization algorithms on the Twitter dataset using the opinions from \cite{de2014learning} and get a sum of opinions $117.31$ for the opinion minimization problem and $363.87$ for the opinion maximization problem. The main observation for here is that the effect of our intervention can be significant for both the opinion maximization and opinion minimization problems. Indeed, the sum of opinions in the case of the former is over an order of magnitude larger than the case of the latter for many of these networks.

\begin{table*}[!ht]
\begin{center}
\begin{tabular}{|c|c|c|c|c|c|c|c|c|}\hline 
$G$        &  \multicolumn{4}{|c|}{$s\sim U[0,1]$}  &  \multicolumn{4}{|c|}{$s \sim$Power Law($\alpha$)} \\ \hline
     
			 &  \multicolumn{1}{|c|}{$\vec{1}^Ts$} &  
			 \multicolumn{1}{|c|}{$\vec{1}^Tz$} &  
			 \multicolumn{1}{|c|}{$\vec{1}^Tz_{\text{min}}$} &  
 \multicolumn{1}{|c|}{$\vec{1}^Tz_{\text{max}}$} &  	
	 \multicolumn{1}{|c|}{$\vec{1}^Ts$} &  
			 \multicolumn{1}{|c|}{$\vec{1}^Tz$} &  
			 \multicolumn{1}{|c|}{$\vec{1}^Tz_{\text{min}}$} &
			  \multicolumn{1}{|c|}{$\vec{1}^Tz_{\text{max}}$}     \\ \hline 		
			 Monks         & 9.34  & 8.12 & 1.22 &  16.38   & 11.88 & 11.29 & 5.62   &  16.68 \\ \hline 	
			 Karate  & 16.70  & 17.83 & 1.97 &  32.47 &  22.29 & 21.07 & 5.27 & 33.56   \\ \hline 	
			 LesMis     & 38.81 & 40.5   & 4.21 & 74.00  & 48.68 & 52.11 & 8.44  & 75.84  \\ \hline
			 NetScience               & 188.39  & 191.79 & 17.19 & 319.59 &  257.58 & 249.88 & 64.43 & 340. 54 \\ \hline 	
			 Email               & 414.62  & 440.96 & 388.39 & 495.44 &  283.62 & 565.86 & 91.18 & 610.72 \\ \hline 	
			 Twitter               & 289.29  & 260.96 & 241.86 & 270.21 &  193.09 & 359.85 &  62.00 & 379.00 \\ \hline 	
\end{tabular}       
\end{center}
\caption{Results for unbudgeted algorithm for two different distributions (uniform, and power law with slope $\alpha$ equal to 2) of innate opinions $s$:   the total sum $\vec{1}^Ts$ of innate opinions , the total sum $\vec{1}^Tz$ of opinions at equilibrium (no intervention), the total sum $\vec{1}^Tz_{\text{min}}$ of opinions produced by our minimization algorithm, and the total sum $\vec{1}^Tz_{\text{max}}$ of opinions produced by our  maximization algorithm.}
\label{tab:synthetic}
\end{table*}

\section{Related Work}
\label{sec:related}
To our knowledge, we are the first to consider an optimization framework based on opinion dynamics with varying susceptibility to persuasion. In the following we review briefly some work that lies close to ours. 

\spara{Susceptibility to Persuasion.} Asch's conformity experiments
are perhaps the most famous study on the impact of agents'
susceptibility to change their opinions \cite{asch}. This study shows
how agents have different propensities for conforming with others.
These propensities are modeled in our context by the set of parameters
$\alpha$. Since the work of Asch, there have been various theories on
peoples' susceptibility to persuasion and how these can be affected. A
notable example is Cialdini's Six Principles of Persuasion, which
highlight reciprocity, commitment and consistency, social proof,
authority, liking, and scarcity, as key principles which can be
utilized to alter peoples' susceptibility to persuasion
\cite{cialdini1,cialdini2}. This framework, and others, have been
discussed in the context of altering susceptibility to persuasion in a
variety of contexts.  Crowley and Hoyer \cite{crowley}, and McGuire
\cite{mcguire} discuss the `optimal arousal theory', i.e., how novel
stimuli can be  utilized for persuasion when discussing arguments.

\spara{Opinion Dynamics Models.}  Opinion dynamics model social learning processes. DeGroot introduced a continuous opinion dynamics model in his seminal work on consensus formation \cite{de}. A set of $n$ individuals in society start with initial opinions on a subject. Individual opinions are updated using the average of the neighborhood of a fixed social network. Friedkin and Johnsen \cite{fj} extended the DeGroot model to include both disagreement and consensus by mixing each individual's \emph{innate belief} with some weight into the averaging process. This has inspired a lot of follow up work, including \cite{ao,asu,bko,mod,gs,gtt}.

\spara{Optimization and Opinion Dynamics.}  Bindel et al. use the
Friedkin-Johnsen model as a framework for understanding  the price of
anarchy in society when individuals selfishly update their opinions in
order to minimize the stress they experience \cite{bko}. They also
consider network design questions: given a budget of $k$ edges, and a
node $u$, how should we add those $k$ edges to $u$ to optimize an
objective related to the stress?   Gionis et al. \cite{gtt} use the
same model to identify a set of target nodes whose innate opinions can
be modified to optimize the sum of expressed opinions.
Finally, in work concurrent with ours, 
Musco et al. adopt the same model to understand
which graph topologies minimize the sum of disagreement and
polarization \cite{musco2017minimizing} .

\spara{Inferring opinions and conformity parameters.} While the
expressed opinion of an agent is readily observable in a social network,
both the agent's innate opinion and conformity parameter are hidden, and
this leads to the question of inferring them. Such
inference problems have been  studied by Das et al. \cite{role,deb}.
Specifically, Das et al. give a near-optimal sampling algorithm for
estimating the true average innate opinion of the social network and
justify the algorithm both analytically and experimentally \cite{deb}.
Das et al.  view the problem of susceptibility parameter estimation as
a problem in constrained optimization and give efficient algorithms,
which they validate on real-world data \cite{role}. There are also
several studies that perform experiments to study what phenomena
influence opinion formation and how well these are captured by
existing models \cite{mod,socinf}.

\spara{Social Influence.} The problem we focus on is related to
influence maximization, a question that draws on work of
Domingos and Richardson \cite{dr,rd} and Kempe et al.  \cite{kkt}.
Recent work by Abebe et al.  considers the problem of influence
maximization where agents who reject a behavior contribute
to a negative reputational effect against it~\cite{aak}. 
There is also a rich line of
empirical work at the intersection of opinion dynamics and influence
maximization including \cite{hb,turf,berger2007consumers} that argues
that agents adopt products to boost their status.  

\section{Conclusion}
\label{sec:concl}
Inspired by a long line of empirical work on modifying agents'
propensity for changing their opinions,  we initiate the study of
social influence by considering interventions at the level of
susceptibility to persuasion. For this purpose, we adopt a popular
opinion dynamics model, and seek to optimize the total sum of
opinions at equilibrium in a social network. We consider
both opinion maximization and minimization, and focus on two
variations: a budgeted and an unbudgeted version. We prove that 
the unbudgeted problem is solvable in polynomial time (in both
its maximization and minimization version). 
For the budgeted version, on the other hand,
in contrast to a number of other opinion or influence maximization formalisms,
the objective function is neither submodular nor supermodular. 
Computationally, we find that it is \NPhard.
In light of this, we introduce a greedy heuristic for the budgeted
setting, which we show has desirable performance on various real and
synthetic datasets.

There are various open directions suggested by this framework. The
first is whether we can provide any approximation algorithms for the
budgeted setting. Through simulations, we also note that while not
giving optimal results, local search algorithms yield satisfactory
results in finding target-sets in the unbudgeted setting. Exploring
this connection might yield insights into the opinion formation
process. The interface of influence maximization and opinion
maximization is also under-explored, and yet another interesting line
of work is to note equivalences for models under the two frameworks,
and when they depart from one another. Many of the questions explored
in Section \ref{sec:related} can also be posed in this setting
including budgeted edge-addition and removal with susceptibility to
persuasion, and understanding the influence of opinions in speed of
convergence to equilibrium. Finally, we can also pose the question of
opinion optimization when there is a budget on the total sum of changes
we can make to the resistance values of nodes,
rather than on the number of agents who can be in a target set.

\section*{Acknowledgements} 
Rediet Abebe was supported in part by a Google scholarship, a Facebook scholarship,
and a Simons Investigator Award, and Jon Kleinberg was supported in part by a Simons
Investigator Award, an ARO MURI grant, a Google Research Grant, and a Facebook
Faculty Research Grant. 

We thank the authors of \cite{de2014learning} for sharing the Twitter dataset in Section \ref{sec:exp}.

Charalampos Tsourakakis would like to thank his newborn son Eftychios for the happiness he brought to his family.

\end{document}